
\documentclass[]{interact}

\usepackage[caption=false]{subfig}
\usepackage{amsmath,amssymb,amsthm}
\usepackage{floatflt}
\usepackage{wrapfig}
\usepackage{import}
\usepackage[linesnumbered,vlined,ruled]{algorithm2e}
\usepackage{array}

\usepackage{color}

\usepackage[numbers,sort&compress]{natbib}

\theoremstyle{plain}
\newtheorem{theorem}{Theorem}[section]

\theoremstyle{definition}

\theoremstyle{remark}

\begin{document}


\title{Move and Time Optimal Arbitrary Pattern Formation by Asynchronous Robots on Infinite Grid}

\author{
\name{Satakshi Ghosh\textsuperscript{a}\thanks{CONTACT Satakshi Ghosh Email: satakshighosh.math.rs@jadavpuruniversity.in} and Pritam Goswami\textsuperscript{a} and Avisek Sharma\textsuperscript{a} and Buddhadeb Sau\textsuperscript{a}}
\affil{\textsuperscript{a}Jadavpur University, Department of Mathematics, Kolkata , West Bengal - 700032, India}
}

\maketitle

\begin{abstract}
The \textsc{Arbitrary Pattern Formation} (\textsc{Apf}) is a widely studied in distributed computing for swarm robots. This problem asks to design a distributed algorithm that allows a team of identical, autonomous mobile robots to form any arbitrary pattern given as input. This paper considers that the robots are operating on a two-dimensional infinite grid. Robots are initially positioned on distinct grid points forming an asymmetric configuration (no two robots have the same snapshot). They operate under a fully asynchronous scheduler and do not have any access to a global coordinate system, but they will align the axes of their local coordinate systems along the grid lines. The previous work dealing with \textsc{Apf} problem solved it in $O(\mathcal{D}^2k)$ robot movements under similar conditions, where $\mathcal{D}$ is the side of the smallest square that can contain both initial and target configuration and, $k$ is the number of robots. Let $\mathcal{D}'=\max\{\mathcal{D},k\}$. This paper presents two algorithms of \textsc{Apf} on an infinite grid. The first algorithm solves the \textsc{Apf} problem using $O(\mathcal{D}')$ asymptotically move optimal. The second algorithm solves the \textsc{Apf} problem in $O(\mathcal{D}')$ epochs, which we show is asymptotically optimal.
\end{abstract}

\begin{keywords}
Distributed computing; Autonomous robots; Arbitrary pattern formation; Robots with lights; Asynchronous; Look-Compute-Move cycle; Grid
\end{keywords}

\section{Introduction}
 In distributed systems robot swarm coordination problems have been studied over the past two decades. The main aim of the distributed system is to use a swarm of inexpensive robots to do any particular work rather than using a very expensive robot. The coordination problem has attracted research interest in recent days.
 \textsc{Arbitrary Pattern Formation} (\textsc{Apf}) is a fundamental coordination problem for autonomous robot swarms. The goal of this problem is to design a distributed algorithm that guides the robots to form any specific but arbitrary pattern given to the robots as an input. In this context, the main research difficulties are, which patterns can be formed, and how they can be formed. In the Euclidean plane, robots can move in any direction with a very small amount of distance, but it is not always possible for robots with weak capabilities to move accurately. So it is interesting to consider this type of problem in grid terrain, where robot movement is restricted in between grid points. In practical applications, the interest has shifted to using a large number of simple robots which are easy to design and deploy and has minimal capabilities. So to make the system cost-effective it is needed to design the robots such that it has minimal capabilities.
 
 In the theoretical framework, depending on the capabilities there are generally four types of robot models. These models are $\mathcal{OBLOT}$, $\mathcal{FSTA}$, $\mathcal{FCOM}$ and $\mathcal{LUMI}$. In each of these models robots are assumed to be autonomous (i.e the robots do not have any central control), identical (i.e the robots are physically indistinguishable), and homogeneous (i.e each robot runs the same algorithm). Furthermore in the $\mathcal{OBLOT}$ model, the robots are silent (i.e there is no means of communication between the robots) and oblivious (i.e the robots do not have any persistent memory to remember their previous state), in $\mathcal{FSTA}$ model the robots are silent but not oblivious, in $\mathcal{FCOM}$ model the robots are oblivious but not silent and in $\mathcal{LUMI}$ model robots are neither silent nor oblivious. The robots do not have access to any global coordinate system. The robots after getting activated operates in a \textsc{Look-Compute-Move} (\textsc{Lcm}) cycle. In \textsc{Look} phase a  robot takes input from its surroundings and then with that input runs the algorithm in \textsc{Compute} phase to get a destination point as an output. The robot then goes to that destination point by moving in the \textsc{Move} phase. The activation of the robots is controlled by a scheduler. There are mainly three types of schedulers considered in the literature. In a synchronous scheduler, time is divided into global rounds. In (\textsc{FSync}) scheduler each robot is activated in all rounds and execute \textsc{Lcm} cycle simultaneously. In a semi-synchronous scheduler (\textsc{SSync})  all robots may not get activated in each round. But the robots that are activated in the same round execute the \textsc{Lcm} cycle simultaneously. Lastly in the asynchronous scheduler (\textsc{ASync}) there is no common notion of time, a robot can be activated at any time. There is no concept of rounds. So there is no assumption regarding synchronization. 
 
 Leader election is an important task for the pattern formation problem, where a unique robot is elected as a leader. In \cite{BoseAKS20} an arbitrary pattern formation algorithm is given in an infinite grid under asynchronous scheduler considering $\mathcal{OBLOT}$ model, but the number of moves is not optimal. So here, this paper aims to form an arbitrary pattern on an infinite grid by a swarm of robots with the optimal number of moves under an asynchronous scheduler considering the $\mathcal{OBLOT}$ model. Furthermore, we propose another algorithm for solving \textsc{Apf} problem on an infinite grid considering asynchronous scheduler and $\mathcal{LUMI}$ model for robots. We will show that the proposed algorithm is time optimal.
 
\section{Related Works} 
The Arbitrary pattern formation problem has been studied in various settings. In the Euclidean plane, this problem was first studied by Suzuki and Yamashita \cite{Suzuki96}. They provided a complete characterization of the class of pattern formable in \textsc{FSync} and \textsc{SSync} for anonymous robots with unbounded memory. Later in \cite{YAMASHITA10} they characterized the families of pattern formable by oblivious robots in \textsc{FSync} and \textsc{SSync}. Then Flochhini \cite{FlocchiniPSW08} studied the cases of solvability of this problem under various assumptions. They showed that without a common coordinate system \textsc{Apf} problem is not solvable, but when there are both axes-agreement the \textsc{Apf} problem can be solved. Further, With one axis agreement, any odd number of robots can form an arbitrary pattern, but an even number of robots cannot in the worst case. In \cite{DieudonnePV10} authors have established  a relationship between \textsc{Leader Election} and \textsc{Arbitrary Pattern Formation} of robots under asynchronous scheduler. Later they also showed that the arbitrary pattern formation is possible to solve when $n \geq 4$ with chirality (resp. $n \geq 5$ without chirality) if and only if leader election is solvable. In \cite{BramasT18} authors consider the arbitrary pattern formation problem with four robots in the asynchronous model with or without chirality.  In \cite{BoseDS21} authors have solved the \textsc{Apf} problem with inaccurate movement, in this case, the formed pattern is very similar to the target pattern, but not exactly the same.  Randomized pattern formation problem was studied in \cite{BramasT16}. In \cite{Cicerone19} authors have shown some configurations where embedded pattern formation is solvable without chirality and some configuration where embedded pattern formation are deterministically unsolvable. The work in \cite{VAIDYANATHAN21} solves \textsc{Apf} in the obstructed visibility model without any agreement in the coordinate system, where they showed that the run time to solve \textsc{Apf} is bounded above by the time required to elect a Leader. A special case of formation problem is mutual visibility problem \cite{AdhikaryBKS18} and gathering problem \cite{StefanoN17}. In mutual visibility, robots are opaque so the main task is to form a configuration where no three robots are co-linear. Recently \textsc{Apf} problem was solved by opaque robots in euclidean plane in \cite{BoseKAS21}. Also, the work in \cite{Bose19} solved this problem with opaque fat robots considering the luminous model. In an infinite grid with opaque robots, the arbitrary pattern formation problem was studied in \cite{kundu22}. Another special case of formation problem, Uniform circle formation is investigated in \cite{FelettiMP18,AdhikaryKS21}. In various graph, for example,  regular tessellation graphs\textsc{Apf} problem was studied in \cite{cicerone20}. Das et al. solved the problem of forming a sequence of patterns in a given order (\cite{FSY15}). Further, they extended the sequence of pattern formation problems for luminous robots (robots with visible external persistent memory) in  \cite{DasFPS20}. There are many works (\cite{FujinagaYOKY15,CiceroneSN19}) where the pattern can be formed by robots with multiplicities. Pattern formation in the presence of faulty robots is an important topic of research. In \cite{PattanayakFMS20} they only allowed crash fault robots.
Another interesting direction of solving this problem is when visibility is limited. Yamauchi in their paper \cite{YamauchiY13}, studied this problem under limited visibility. They first showed that oblivious robots under \textsc{FSync} model and with limited visibility can not solve \textsc{Apf}. Therefore, they considered non-oblivious robots with unlimited memory. For these robots, they presented algorithms that work in \textsc{FSync} with non-rigid movements and in \textsc{SSync} with rigid movements. After that in \cite{LukovszkiH14}, the authors have solved this problem in an infinite grid under 2 hop visibility. The problem was studied in a synchronous setting for robots with constant-size memory, and having a common coordinate system. Bose et. al \cite{BoseAKS20} solved \textsc{Apf} on an infinite grid by oblivious robots with full visibility. They solved this by $O(kD^2)$ move where $k$ is the number of robots and $\mathcal{D}$ is the side of the smallest square which can contain both initial and target configuration.

\section{Problem description and our contribution}
This paper deals with two arbitrary pattern formation problems on an infinite grid. The robots are autonomous, anonymous, identical, and homogeneous. They move only through the edges of the grid. Initially, the robots are placed arbitrarily on the grid. From this configuration, they need to move to a target configuration or, Pattern (a set of target coordinates) without collision. Here we assume that the initial configuration is asymmetric. The robots operate in \textsc{Lcm} cycles under an adversarial asynchronous scheduler. In the first problem, we have considered the $\mathcal{OBLOT}$ model and showed that the \textsc{Apf} can be solved with the optimal number of moves. In the second problem, the $\mathcal{LUMI}$ model is considered where each of these luminous robots has one light which takes three colors and we showed that \textsc{Apf} is solved with optimal time (here time is calculated using the unit epoch where it is assumed that in each epoch a robot has been activated at least once). In the first algorithm, we solved \textsc{Apf} by $O(\mathcal{D}')$ move which is move optimal. And in the second algorithm \textsc{Apf} is solved by $O(\mathcal{D}')$ epoch which is time optimal. Here $D' = \max\{D,k\}$, $k$ is the number of robots, and $\mathcal{D}$ is the side of the smallest square that can contain both initial and target configuration.

\section{Model:}
\paragraph*{Classical oblivious Robots:} In the first problem The $\mathcal{OBLOT}$ model is considered for the robots. In this model, robots are anonymous, identical, and oblivious, i.e. they have no memory of their past rounds. They can not communicate with each other. All robots are initially in distinct positions on the grid. The robots can see the entire grid and all other robots' positions which means they have global visibility. Robots have no access to any common global coordinate system. They have no common notion of chirality or direction. A robot has its local view and it can calculate the positions of other robots with respect to its local coordinate system with the origin at its own position. Here is no agreement on the grid about which line is  $x$ or $y$-axis and also about the positive or negative direction of the axes. As the robots can see the entire grid they will set the axes of their local coordinate systems along the grid lines.

\paragraph*{Robots with lights:} In the second problem the $\mathcal{LUMI}$ model has been considered. In this model, the robots are anonymous and identical and they have constant memory (finite number of lights). Each robot has a light that can assume one color at a time from a constant number of different colors. All the other assumptions are the same as the classical oblivious robots model.

\paragraph*{Look-Compute-Move cycles:}
An active robot operates according to the Look-Compute-Move cycle. In each cycle a robot takes a snapshot of the positions of the other robots according to its own local coordinate system (\textsc{Look}); based on this snapshot, it executes a deterministic algorithm to determine whether to stay put or to move to an adjacent grid point (\textsc{Compute}); and based on the algorithm the robots either remain stationary or makes a move to an adjacent grid point (\textsc{Move}). When the robots are oblivious they have no memory of past configurations and previous actions. After completing each \textsc{Look-Compute-Move} cycle the contents in each robot's local memory are deleted. When each robot is equipped with an externally visible light, which can assume a $O(1)$ number of predefined lights, the robots communicate with each other using these lights.  The lights are not deleted at the end of a cycle. In second algorithm we use one light which takes \textsc{off}, \textsc{head} and \textsc{line} colours.

\paragraph*{Scheduler} 
We assume that robots are controlled by a fully asynchronous adversarial scheduler (\textsc{ASync}). The robots are activated independently and each robot executes its cycles independently. This implies the amount of time spent in \textsc{Look}, \textsc{Compute}, \textsc{Move} and inactive states is finite but unbounded, unpredictable and not same for different robots. The robots have no common notion of time.

\paragraph*{Movement:} In discrete domains movements of robots are assumed to be instantaneous. This implies that the robots are always seen on grid points, not on edges. However, in our work, we do not need this assumption. In the first proposed move optimal algorithm for simplicity, we assume the movements are to be instantaneous, however, this algorithm also works without this assumption, in that case, the robots are asked to wait and do nothing if they see a robot on a grid edge. In the second proposed time-optimal algorithm no such assumption is required. That is, if a robot sees any robot on an edge, then also the robot does its job as directed by the algorithm. The movement of the robots is restricted from one grid point to one of its four neighboring grid points.

\paragraph*{Measuring Run-time:}
Generally, we measure time in rounds in fully synchronous settings.  But as robots can stay inactive for an indeterminate time in semi-synchronous and asynchronous models, here we consider epochs other than rounds. During an epoch, it is assumed that all robots are activated at least once. Here in the second algorithm, we calculate the run-time with respect to epochs.

\section{Optimal move APF algorithm}
\subsection{Agreement on global co-ordinate system}
 Here we consider an infinite grid as a Cartesian product $G=P \times P$. The infinite grid G is embedded in the Cartesian Plane $R^2$. We know that the solvability of Arbitrary pattern formation depends on the symmetries of the initial configuration of the robots. Here we are assuming that the initial configuration is asymmetric. The robots do not have an access to any global coordinate system even though each robot can form a local coordinate system aligning the axes along the grid lines. To form the target pattern the robots need to reach an agreement on a global coordinate system. This subsection provides details of the procedure that allows the robots to reach an agreement on a global coordinate system.

For a given configuration ($\mathcal{C}$) formed by the robots, let smallest enclosing rectangle, denoted by $s.rect(\mathcal{C})$, is the smallest grid aligned rectangle which contains all the robots. Suppose the $s.rect$ of the initial configuration $\mathcal{C}_I$ is a rectangle $\mathcal{R}=ABCD$ of size $m\times n$, such that $m>n>1$. Let $|AB|=n$. Then consider the binary string $\{s_i\}$ associated with a corner $A$, $\lambda_{AB}$ as follows. Scan the grid from $A$ along the side $AB$ to $B$ and sequentially all grid lines of $s.rect(\mathcal{C_I})$ parallel to $AB$ in the same direction. And $s_i=0$, if the position is unoccupied and $s_i=1$ otherwise. Similarly construct the other binary strings $\lambda_{BA}$, $\lambda_{CD}$ and $\lambda_{DC}$. Since the initial configuration is asymmetric we can find a unique lexicographically largest string. If $\lambda_{AB}$ is the lexicographically largest string, then $A$ is called as the leading corner of $\mathcal{R}$.

Next, suppose $\mathcal{R}$ is an $m\times m$ square, then consider the eight binary strings $\lambda_{AB}$, $\lambda_{BA}$, $\lambda_{CD}$, $\lambda_{DC}$, $\lambda_{BC}$, $\lambda_{CB}$, $\lambda_{AD}$, $\lambda_{DA}$. Again since initial configuration is asymmetric, we can find a unique lexicographically largest string among them. Hence we can find a leading corner here as well.

Next, let $\mathcal{C}_I$ be a line $AB$, we will have two strings $\lambda_{AB}$ and $\lambda_{BA}$. Since $\mathcal{C}_I$ is asymmetric then $\lambda_{AB}$ and $\lambda_{BA}$ must be distinct. If $\lambda_{AB}$ is lexicographically larger than $\lambda_{BA}$, then we choose $A$ as the leading corner.

Now for either case, if $\lambda_{AB}$ is the lexicographically largest string then the leading corner $A$ is considered as the origin, and the $x-$ axis is taken along the $AB$ line. If $\mathcal{C}_I$ is not a line then the $y-$ axis is taken along the $AD$ line. If $\lambda_{AB}$ is a line then the $y-$ coordinate of all the positions of robots is going to be zero and in this case, the $y-$ axis will be determined later.   
\begin{figure}[ht]
    \centering
     \includegraphics[width=.7\linewidth]{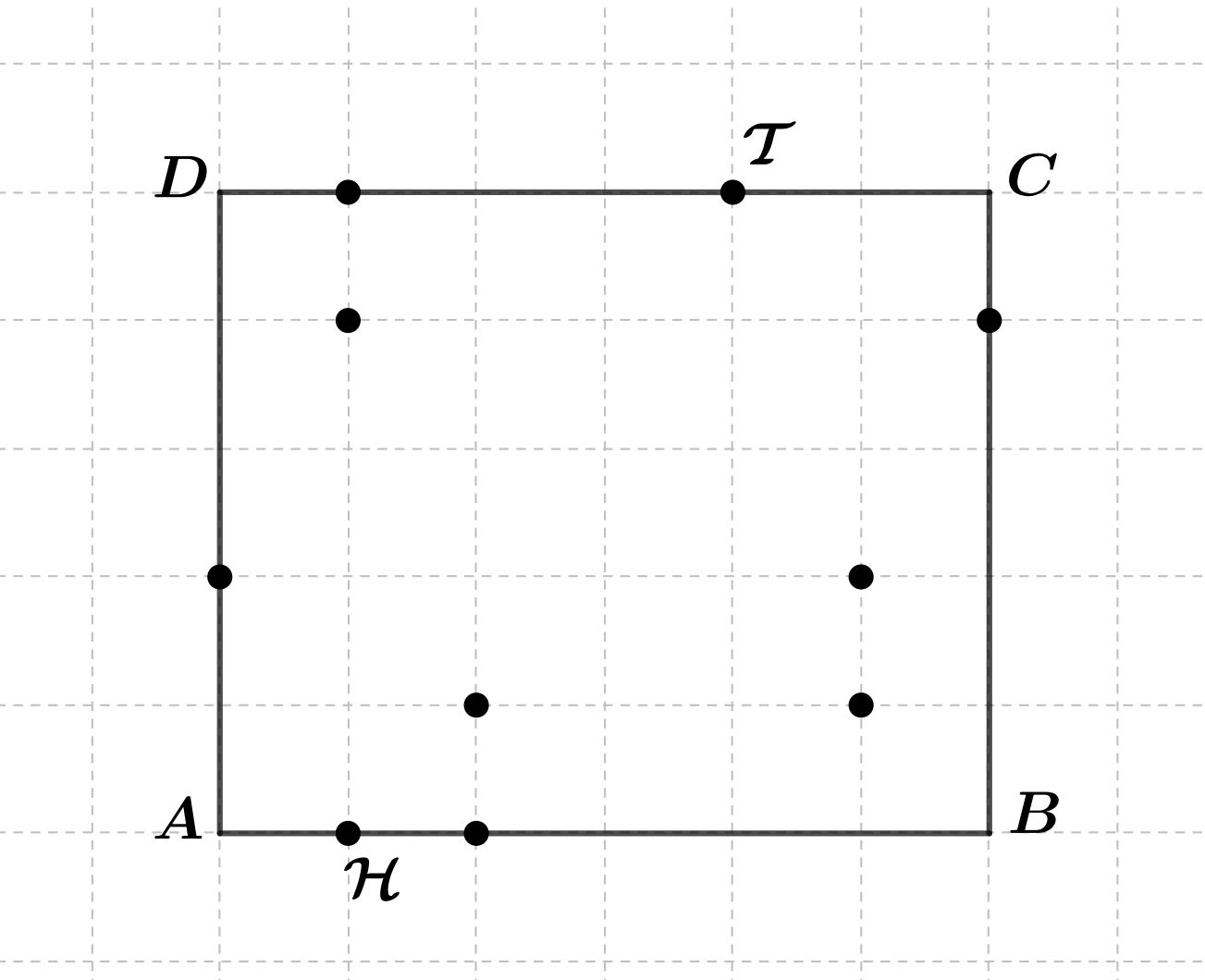}
     \caption{In this configuration AB is the largest string. Here the binary string of AB is 011000000100101000010000000001000010100100. $\mathcal{H}$ and $\mathcal{T}$ are head and tail respectively.}
     \label{Fig:1}
    \end{figure}
For any given asymmetric configuration $\mathcal{C}$ if $\lambda_{AB}$ is the largest associated binary string to $\mathcal{C}$ then the robot causing first non zero entry in $\lambda_{AB}$ is called $head$ let $\mathcal{H}$ and the robot causing last non zero entry in $\lambda_{AB}$ is called as $tail$ let $\mathcal{T}$. We denote the $i^{th}$ robot of the $\lambda_{AB}$ string as $r_{i-1}$. A robot other than head and tail is called \textit{\textbf{Inner robot}}. Further we denote  $\mathcal{C}'=\mathcal{C}\setminus\{tail\}$ and $\mathcal{C}''=\mathcal{C}\setminus\{tail, head\}$ and $\mathcal{C}'''=\mathcal{C}\setminus\{Head\}$,

Let $\mathcal{C}_T$ be the target configuration and $s.rect(\mathcal{C}_T)=\mathcal{R}_T$. Let $\mathcal{R}_T$ is a rectangle of size $M\times N$ with $M\ge N$. We can calculate the binary strings associated with corners in the same manner as previous. $\mathcal{C}_T$ is expressed in the coordinate system with respect to the origin, where origin will be the leading corner. Let the $A'B'C'D'$ be the smallest rectangle enclosing the target pattern with $A'B'\le B'C'$. Let $\lambda_{A'B'}$ be the largest (may not be unique) among all other strings. Then the target pattern is to be formed such that $A=A'$, $A'B'$ direction is along the positive $x$ axis and $A'D'$ direction is along the positive $y$ axis. If target pattern have symmetry then we have to choose any one among the larger string and fixed the coordinate system. So as previously said $head_{target}$ will be the first one and $tail_{target}$ will be the last one in the $s.rect$ of $\mathcal{{C}_T}$. Also we define $\mathcal{C_T}'=\mathcal{C_{T}}\setminus\{tail_{target}\}$, $\mathcal{C_T}''=\mathcal{C_T}\setminus\{head_{target},tail_{target}\}$, $\mathcal{C_T}'''=\mathcal{C_T}\setminus\{head_{target}\}$. We denote the $head_{target}$ position as $t_0$ and $tail_{target}$ position as $t_{k-1}$. Let $H_i$ be the horizontal line having the height $i$ from x-axis. Let for each $i$ there are $p(i)$ target positions on $H_i$. We denote the target positions of $H_0$ as $t_0$, $t_1$......$t_{p(0)-1}$ from left to right. For $H_1$ we denote the target positions as $t_{p(0)}$ to $t_{p(0)+p(1)-1}$ from right to left. For $H_2$ we denote the target positions as $t_{p(0)+p(1)}$ to $t_{p(0)+p(1)+p(2)-1}$ from right to left. Similarly we can denote all other target positions on $H_i$, $i>0$ except $tail_{target}$.

\subsection{Definitions:}
\paragraph*{\textit{\textbf{Inner robot}}:}A robot which is not head or tail in a configuration is called inner robot.
\paragraph*{\textit{\textbf{Compact Line}}:} A line is called compact if there is no empty grid point between two robots.

\subsection{A brief outline of optimal move algorithm}
In this algorithm, our goal is to make \textsc{Apf} with move optimal. For this robots initially form a line and then they form the arbitrary pattern that is given as input. We can divide our algorithm into eight phases. Since the initial configuration is asymmetric the robot can agree on a global coordinate system.  So robots can recognize where the pattern can be formed. Here robots have to maintain the coordinate system during their movements. When a robot wakes up it can be seen in any phase among the eight phases. So as the robots are oblivious they can check by their condition in which phase it is currently in. The conditions are expressed in the Boolean function listed in Table \ref{TABLE-1}. As the movement of the robot is restricted in the discrete domain, here we have to maintain the movement without collision throughout the algorithm. As maintaining the asymmetry is another main challenge of our algorithm, in the first three phases the head will be put at the origin and the tail will expand to the smallest enclosing rectangle for another robot can move but the head and tail remain unchanged and asymmetry also maintained. In phase four robots form a line on x-axis without the tail and one inner robot of $\mathcal{C}'$. In phase five all robots move to the fixed target position sequentially without a head and tail. In the last three phases, the head and tail will reach their fixed target positions. During these phases movement of robots are difficult as here the coordinate system may change. But here we showed that asymmetry and global coordinate will be maintained in all phases. In this way, arbitrary pattern formation can be done.

\vspace{0.01\linewidth}
\begin{table}
\begin{center}
\begin{tabular}{ | p{1.5em} | p{12cm}| } 
   \hline
   $S_0$ & $\mathcal{C}=\mathcal{C}_T$ \\
   \hline
    $S_1$ & $\mathcal{C}'=\mathcal{C}'_T$\\
    \hline
    $S_2$ & x Coordinate of the tail in $\mathcal{C}$ = $x$ Coordinate of $t_{target}$ in $\mathcal{C}_T$ \\
    \hline
    $S_3$ & $m\geq \max{\{N,n\}}+2$ \\
    \hline
     $S_4$ & $m\geq 2.\max{\{M,V\}}$ where $v$ is the length of the vertical side of the smallest enclosing rectangle of $\mathcal{C}'$ \\
    \hline
     $S_5$ & The head in $\mathcal{C}$ is at the origin. \\
    \hline
     $S_6$ & $n\geq \max{\{N+1,H+1,k\}}$ where $H$ is the length of the horizontal side of the smallest enclosing rectangle of $\mathcal{C}'$\\
    \hline
     $S_7$ & $\mathcal{C}''=\mathcal{C}''_T$ \\
    \hline
     $S_8$ & $\mathcal{C}'$ has a non-trivial reflectional symmetry with respect to a vertical line.\\
    \hline
    $S_9$ & $\mathcal{C}'''=\mathcal{C}'''_T$.\\
    \hline
    $S_{10}$ & Line formation on x-axis without tail and one inner robot.\\
    \hline
 
\end{tabular}
\vspace{0.01\linewidth}
\caption{Boolean functions} 
 \label{TABLE-1}
\end{center}
\end{table}
\vspace{0.01\linewidth}

\subsection{Detailed description of the eight phases}

\paragraph*{Phase 1:} A robot is in phase 1 then tail will move upwards and all other robots will remain static. When in phase 1 $\neg{\{S_1\land S_2}\} \land \neg{\{S_3\land S_4}\}$ is true.\\

\begin{theorem}\label{Theorem_1}
If we have an asymmetric configuration $\mathcal{C}$ at some time t
\begin{itemize}
    \item After one move upward, the new configuration is still asymmetric and the coordinate system remains unchanged.
    \item after one move by the tail upwards $\neg{\{S_1\land S_2}\}= true$
\end{itemize}
\end{theorem}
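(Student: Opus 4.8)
\emph{Proof idea.} First I fix the global coordinate system induced by $\mathcal{C}$: the leading corner $A$ is the origin, the $x$-axis runs from $A$ to $B$ along the short side of $s.rect(\mathcal{C})$, the $y$-axis from $A$ to $D$, and, by hypothesis, $\lambda_{AB}$ is the \emph{unique} lexicographically largest corner string of $\mathcal{C}$. Reading the rows of $s.rect(\mathcal{C})$ parallel to $AB$ from the $A$-side to the $B$-side and stacking them from the bottom up, $\lambda_{AB}=R_0R_1\cdots R_t$; here $R_0$ and $R_t$ are nonempty (they lie on sides of $s.rect$), and the tail $\mathcal{T}$, realising the last $1$ of $\lambda_{AB}$, is the $B$-most robot of the top row $R_t$, say at $x$-coordinate $p$. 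The upward step turns $\mathcal{C}$ into a configuration $\widehat{\mathcal{C}}$ whose rows are $R_0,\dots,R_{t-1}$, then $R_t$ with its entry at $x=p$ cleared, then a fresh top row carrying a single $1$ at $x=p$. Thus $s.rect(\widehat{\mathcal{C}})$ keeps the corner $A$ and the side $AB$, is exactly one unit taller, and is a non-square rectangle with the same short side. I would then prove the bullets in the order (i) the coordinate system is unchanged, (ii) $\widehat{\mathcal{C}}$ is still asymmetric, (iii) $\neg\{S_1\land S_2\}$ still holds.

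For (i) the plan is to show $\widehat\lambda_{AB}$ is still the unique lexicographically largest of the four corner strings of $s.rect(\widehat{\mathcal{C}})$ that scan the short side first, using only the inequalities $\lambda_{AB}>\lambda_{BA}$, $\lambda_{AB}>\lambda_{DC}$, $\lambda_{AB}>\lambda_{CD}$ available in $\mathcal{C}$ (when $s.rect(\mathcal{C})$ is a square these hold because $\lambda_{AB}$ beat all eight corner strings there). Against $\widehat\lambda_{BA}$: the two strings agree on all rows strictly below the top, so if $\lambda_{AB}$ and $\lambda_{BA}$ already differed there, the same bit decides it; otherwise the first difference is inside the top row, and a short case analysis exploiting that the tail is the last occupied cell of $R_t$ (equivalently, the leading $1$ of $R_t$ read backwards) shows that clearing that cell and appending a sparse row above cannot reverse the advantage of $\widehat\lambda_{AB}$. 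Against $\widehat\lambda_{DC}$ and $\widehat\lambda_{CD}$: compare block by block from the outside in; the outermost step pits $R_0$ against a single-$1$ row, and $R_0$ wins because $\lambda_{AB}>\lambda_{DC}$ (resp.\ $\lambda_{AB}>\lambda_{CD}$) forces $R_0$ to dominate the old top row $R_t$ (resp.\ its backward reading), which in turn dominates the single-$1$ row since the tail is the rightmost $1$ of $R_t$. If that outermost comparison is strict we are finished; the delicate case is equality all the way out, which pins $R_0$, $R_t$ and the new top row to all be the single-$1$ row at $x=p$ (so $\mathcal{T}$ was alone on $R_t$), and then peeling rows off inwards --- each equality forcing, again via $\lambda_{AB}>\lambda_{DC}$, a deeper row of $\mathcal{C}$ to be empty --- the comparison can fail to turn strict only if $\mathcal{C}$ was just two robots at opposite corners of $s.rect(\mathcal{C})$, a half-turn-symmetric configuration, contradicting the asymmetry of $\mathcal{C}$. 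I expect this degenerate ``all equalities'' branch to be the main obstacle; everything else is bookkeeping on finite strings.

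Granting (i), part (ii) is short: a non-trivial grid isometry fixing $\widehat{\mathcal{C}}$ must fix the non-square rectangle $s.rect(\widehat{\mathcal{C}})$, hence be its half-turn or a reflection in one of its two mid-lines, each of which carries the string $\widehat\lambda_{AB}$ to one of the three corner strings just proved strictly smaller --- impossible --- so $\widehat{\mathcal{C}}$ is asymmetric; with (i) this is the first bullet. For (iii): by (i) the coordinate system is unchanged and the robot that moved now realises the unique last $1$ of $\widehat\lambda_{AB}$, so it is still the tail; hence $\widehat{\mathcal{C}}'=\widehat{\mathcal{C}}\setminus\{\text{tail}\}=\mathcal{C}\setminus\{\mathcal{T}\}=\mathcal{C}'$ and the tail's $x$-coordinate is what it was, so the truth values of $S_1$ ($\mathcal{C}'=\mathcal{C}'_T$) and $S_2$ (the tail's $x$-coordinate equals that of $t_{target}$) are unaffected by the move. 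Since a robot performs this upward move only while in Phase~1, whose activation predicate includes $\neg\{S_1\land S_2\}$, that conjunction is still false afterwards, which is the second bullet. (If $\mathcal{C}_I$ is a horizontal line, this first upward step is exactly what pins down the $y$-axis, in the direction of $\mathcal{T}$'s motion, and ``unchanged'' must there be read as ``consistent with the axis so fixed''.)
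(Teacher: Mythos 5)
Your proposal is correct and follows essentially the same route as the paper's proof: fix the frame determined by $\lambda_{AB}$, show that after the tail's upward step the new $AB$-string is still the strict lexicographic maximum among the (now four, since the new enclosing rectangle is non-square) corner strings, conclude that the configuration stays asymmetric with the same coordinate system, and then observe that $\mathcal{C}'$ and the tail's $x$-coordinate are untouched so the truth value of $S_1\wedge S_2$ is preserved (and was false by the Phase~1 precondition). The only differences are organizational --- you compare the strings row-block by row-block where the paper argues via the positions $p,q$ of the tail in $\lambda_{AB}$ and $\lambda_{BA}$, and you spell out the $\lambda_{CD}/\lambda_{DC}$ comparisons and the ``all equalities'' degenerate branch that the paper dismisses with ``by calculating''; your identification of the exact degenerate configuration there is slightly off (the forced shape is two vertically aligned robots, which contradicts the width of the enclosing rectangle, rather than two robots at opposite corners), but the contradiction with the hypotheses goes through either way.
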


\begin{proof}
Let ABCD be the initial smallest enclosing rectangle at any time t, and let the binary string associated with $AB$ is the largest with $|AB|=n$ and $|AD|=m$, $m \geq n$. So initially the tail is on the side CD. But after the tail moves upward the smallest enclosing rectangle changes. Let the new rectangle is $ABC'D'$. Here tail $\mathcal{T}$ is now on the side $C'D'$.
Now let $\mathcal{T}$ is the only robot initially on side CD , then it is obvious that $\lambda{_{AB}^{new}}$ $>$ $\lambda{_{BA}^{new}}$.
But if there are multiple robots on CD then let $t$ is the $p^{th}$ and $q^{th}$ term of $\lambda{_{AB}^{old}}$ and $\lambda{_{BA}^{old}}$.
Then,\\

\paragraph*{Case-1:}When $p=q$ then $t$ is the middle robot of CD, here $p^{th}$ term is the last 1 occurs in $\lambda{_{AB}^{old}}$ so if we calculate the binary string of first $(p-1)$ term of AB and BA in the new $s.rect$ then also we get $\lambda{_{AB}^{new}}$ $>$ $\lambda{_{BA}^{new}}$.

\paragraph*{Case-2:} When $p>q$ then if we calculate the binary strings of $\lambda{_{AB}^{old}}$ and $\lambda{_{BA}^{old}}$
then $\mathcal{T}$ will be appear earlier in BA, than AB. Now if we calculate the binary strings of first q term of AB and BA then $\lambda{_{BA}^{old}}|_q$ $>$ $\lambda{_{BA}^{new}}|_q$. Also $\lambda{_{AB}^{old}}|_q$ $>$ $\lambda{_{AB}^{new}}|_q$. But $\lambda{_{AB}^{old}}$ $>$ $\lambda{_{BA}^{old}}$. So we have $\lambda{_{AB}^{old}}|_q$ $>$ $\lambda{_{BA}^{old}}|_q$. Finally we can say $\lambda{_{AB}^{new}}|_q$ $>$ $\lambda{_{BA}^{new}}|_q$, so $\lambda{_{AB}^{new}}$ $>$ $\lambda{_{BA}^{new}}$.

\paragraph*{Case-3:} When $p<q$ in that case when $\mathcal{T}$ robot move upward  then in the new ${s.rect}$ we can calculate that in this case also $\lambda{_{AB}^{new}}$ $>$ $\lambda{_{BA}^{new}}$.

So in all the cases $\lambda{_{AB}^{new}}$ $>$ $\lambda{_{BA}^{new}}$. Now we show that the binary string of AB is larger than $C'D'$. As $\mathcal{T}$ is the tail so we know that the binary string of AB is larger than CD, but when the tail robot moves one step upward in that case as there is no robot other than the tail in $C'D'$ so if we calculate binary string it will be $\lambda{_{AB}^{new}}$ $>$ $\lambda{_{C'D'}^{new}}$.\\
In this case, $ABC'D'$ is a non-square grid, so four binary strings to consider here. By calculating we can say that AB is the largest binary string in this new smallest enclosing rectangle. So the new configuration is still asymmetric. So the coordinate system is unchanged. As the tail moves upward so the x coordinate remains unchanged, so $\neg{\{S_1\land S_2}\}$ remains the same after one move by the tail robot.
\end{proof}

\paragraph*{Phase 2:} In this phase head $\mathcal{H}$ will move left towards the origin. When the algorithm is in phase 2 then either $S_3 \land S_4 \land \neg{S_5} \land \neg{S_7}$ or $\neg S_2 \land S_3 \land S_4 \land \neg S_5 \land S_7$ is true.\\

\begin{theorem}
If we have an asymmetric configuration $\mathcal{C}$ at time t in phase 2 then 
\begin{enumerate}
    \item after one move by the head robot to the left, the new configuration is still asymmetric and the coordination system is not changed.
    \item after a finite number of moves by the head to the left $S_5$ is true and phase 2 terminates.
\end{enumerate}
\end{theorem}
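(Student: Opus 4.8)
The plan is to follow the single string $\lambda_{AB}$ through one leftward step of the head, show it remains the unique lexicographically largest corner string (so the agreed coordinate system and the identity of the head are untouched), and then obtain part 2 by a short induction on the head's $x$-coordinate.

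First I would fix the picture at time $t$. Since $\mathcal{C}$ is asymmetric and we are in Phase 2, $\lambda_{AB}$ is the unique lexicographically largest corner string, the origin is the corner $A$, the $x$-axis runs along $AB$, and the head $\mathcal{H}$ is the leftmost robot on the bottom side $AB$, say at $(h,0)$. Because $S_3$ holds we have $m\ge n+2$, so $s.rect(\mathcal{C})$ is a genuine non-square rectangle and only the four strings $\lambda_{AB},\lambda_{BA},\lambda_{CD},\lambda_{DC}$ need be compared, each of which scans a side of length $n$ first. Because $\neg S_5$ holds, $\mathcal{H}$ is not at the origin; since the left side of $s.rect(\mathcal{C})$ is the line $x=0$, there is a robot with $x$-coordinate $0$ distinct from $\mathcal{H}$, and hence $h\ge 1$. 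Also $(h-1,0)$ is unoccupied (nothing lies to the left of $\mathcal{H}$ on its row), so the move causes no collision.

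For part 1, after $\mathcal{H}$ steps to $(h-1,0)$ I would argue four things. (a) $s.rect(\mathcal{C})$ is unchanged: the robot at $x=0$ pins the left side, $h-1\ge 0$ keeps $\mathcal{H}$ inside, and no other side is affected, so all four corners stay on the same grid points. (b) $\lambda_{AB}$ strictly increases: the bottom row is the first block scanned, and its leading $1$ moves from index $h$ to index $h-1$ while index $h$ becomes $0$, so $\lambda_{AB}^{new}>\lambda_{AB}^{old}$. (c) $\lambda_{BA}$ strictly decreases: in the $B$-to-$A$ scan the positions $x=n-1,\dots,h+1$ are unchanged, and at the scan index for $x=h$ the old string has a $1$ and the new one a $0$, so $\lambda_{BA}^{new}<\lambda_{BA}^{old}$. (d) For $\lambda_{CD}$ and $\lambda_{DC}$ the bottom row is the \emph{last} block scanned, so their first $n$ bits (the top row) are untouched. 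Combining, $\lambda_{AB}^{new}>\lambda_{AB}^{old}>\lambda_{BA}^{old}>\lambda_{BA}^{new}$; and since $\lambda_{AB}^{old}\ge\lambda_{CD}^{old}$, either the first $n$ bits already decide this comparison — in which case the new, strictly larger bottom block of $\lambda_{AB}$ still beats the unchanged top block of $\lambda_{CD}$ — or those $n$ bits agreed, in which case the new bottom block of $\lambda_{AB}$ now strictly exceeds the top block of $\lambda_{CD}$; the same reasoning handles $\lambda_{DC}$. Hence $\lambda_{AB}^{new}$ is still the unique largest corner string, so $A$ is still the leading corner, the coordinate system is unchanged, and $\mathcal{H}$ is still the head.

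For part 2, I would note that in Phase 2 only $\mathcal{H}$ ever moves, and $S_2,S_7$ do not refer to $\mathcal{H}$ and $s.rect(\mathcal{C})$ is fixed, so $S_2,S_3,S_4,S_7$ are constant during the phase; thus the phase predicate stays true as long as $\neg S_5$. Applying part 1 inductively, $\mathcal{H}$ keeps its identity and its $x$-coordinate drops by exactly $1$ at each of its moves while remaining $\ge 0$, so after exactly $h$ moves of the head (where $h$ was its initial $x$-coordinate) it reaches $(0,0)$, the leading corner, making $S_5$ true; since each Phase-2 predicate contains the conjunct $\neg S_5$, Phase 2 then terminates. The step I expect to be the real obstacle is item (d): making rigorous that a change confined to the bottom row improves $\lambda_{AB}$ at an early scan index yet is invisible to $\lambda_{CD},\lambda_{DC}$ until their final block, together with checking the boundary case in which $\mathcal{H}$ is the only robot on the bottom side (there the leading $1$ of $\lambda_{BA}$ simply shifts deeper, so (c) still holds). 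The rest is bookkeeping on the scan order.
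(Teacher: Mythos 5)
Your proof is correct and follows essentially the same route as the paper: both track the unique lexicographically largest string $\lambda_{AB}$ through one leftward step of the head and argue via the position of its first $1$ (the paper observes that this first $1$ moves from index $i$ to $i-1$ while every other string is zero on its first $i-1$ entries, so $\lambda_{AB}$ stays strictly largest, and part 2 follows by iterating). Your write-up is in fact more careful than the paper's one-paragraph argument, since you explicitly check that the smallest enclosing rectangle is pinned, that $\lambda_{BA}$ can only decrease, and that $\lambda_{CD},\lambda_{DC}$ are untouched on their early blocks --- points the paper leaves implicit.
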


\begin{proof}
As $ABCD$ is the smallest enclosing rectangle of all robots, let $\lambda_{AB}$ be the lexicographically largest string, after one move of the head to the left, now also $\lambda_{AB}$ is the largest string. Let at the $i^{th}$ term the first 1 occurs in $\lambda_{AB}$, then in the other strings all the $(i-1)$ terms are 0. But when the head moves one distance to the left then the new string form is now the largest. So the new configuration is asymmetric and the global coordinate will not change. So (1) is true, similarly by the finite number of moves head move to the origin then $S_5$ true.
\end{proof}

\paragraph*{Phase 3:} The goal of this phase is to make $S_6$ true. In this phase the robots will check either $S_8$ is true or false. When algorithm is in phase 3 then $S_3 \land S_4 \land S_5 \land \neg S_6 \land \neg S_7$ = true. When $S_8$ is false, then the tail will move rightwards and the rest will remain static. But when $S_8$ is true, the $\mathcal{C'}$ has a nontrivial reflectional symmetry with respect to a vertical line $V$.

Let the smallest enclosing rectangle is $\mathcal{R}$ = $ABCD$ where $\lvert{AB}\rvert$ = $n$ and $\lvert{AD}\rvert$ = $m$ , $m>n$. Let $\lambda_{AB}$ be the lexicographically largest string. In this case tail will move right and after finite number of move we have $S_3 \land S_4 \land S_5 \land S_6 \land \neg S_7$ = true.\\

\begin{theorem}
If we have an asymmetric configuration $\mathcal{C}$ at some time $t$ then 
\begin{enumerate}
    \item after one move rightward the new configuration is still asymmetric and the global coordinate system remain unchanged.
    \item after one move $S_4 \land S_5 \land \neg S_7$ = true.
    \item after finite number of moves by the tail to the rightwards $S_3 \land S_4 \land S_5 \land  S_6 \land \neg S_7$ = true.
    
\end{enumerate}
\end{theorem}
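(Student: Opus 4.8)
My plan is to handle the three items in order, using throughout that in Phase~3 --- in the sub-case $\neg S_8$ isolated above --- the tail $\mathcal{T}$ is the only robot that moves, and it moves a single step to the right. Fix $\mathcal{C}$ at time $t$ with the Phase~3 guard $S_3\wedge S_4\wedge S_5\wedge\neg S_6\wedge\neg S_7$ in force, so that $s.rect(\mathcal{C})=ABCD$ with $|AB|=n<m=|AD|$, the string $\lambda_{AB}$ is the unique lexicographically largest one, the head sits at the origin $A$ (this is $S_5$), and $\mathcal{T}$ realises the last $1$ of $\lambda_{AB}$, i.e.\ it is the unique robot on the side $CD$ (it became unique there during Phase~1 and stayed so through Phase~2).

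For item~(1) I would run the same kind of case analysis as in Theorem~\ref{Theorem_1} and the Phase~2 theorem. If the new position of $\mathcal{T}$ still lies in $ABCD$, then $s.rect$ is unchanged, $\mathcal{T}$ is still the unique robot on the top side and hence still the tail, and the first $1$ of $\lambda_{AB}$ is still at the origin; the move perturbs only the top-row block of each of $\lambda_{AB},\lambda_{BA},\lambda_{CD},\lambda_{DC}$. Against $\lambda_{CD}$ and $\lambda_{DC}$ the string $\lambda_{AB}$ still wins on the very first symbol, since the head occupies $A$ while the corners $C,D$ stay unoccupied when $\mathcal{T}$ is strictly inside; against $\lambda_{BA}$ it wins because $\neg S_8$ guarantees that $\lambda_{AB}$ and $\lambda_{BA}$ already differ on a row strictly below the top one, a place the move does not touch --- this is precisely why Phase~3 first tests $S_8$. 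If instead $\mathcal{T}$ steps out of $ABCD$, the enclosing rectangle grows to $AB'C'D$ with $|AB'|=n+1$, and $n+1<m$ because $S_3$ gives $m\ge n+2$; so we again compare four ``horizontal'' strings, $\mathcal{T}$ is now the unique robot in column $n+1$ and hence still the tail, and the same three comparisons go through (the new corners $B',C',D$ are either vacant or lose on a short prefix, again using that $\lambda_{AB'}$ starts with a $1$ at the origin and using $\neg S_8$). In every case $\lambda_{AB}$ (renamed $\lambda_{AB'}$ if the rectangle grew) is once more the unique largest string rooted at the same corner with the same axes, so the configuration remains asymmetric and the agreed coordinate system is unchanged.

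Item~(2) is then immediate: since only $\mathcal{T}$ moved and it is still the tail by~(1), the head is untouched, so $S_5$ holds; the set $\mathcal{C}'=\mathcal{C}\setminus\{\mathcal{T}\}$ is untouched, so $s.rect(\mathcal{C}')$ and with it $v$ are unchanged, which together with the unchanged $m$ keeps $S_4$ true; and $\mathcal{C}''=\mathcal{C}\setminus\{\mathrm{head},\mathcal{T}\}$ is literally the same point set, so $\neg S_7$ persists. Iterating, every configuration produced by rightward moves of $\mathcal{T}$ is asymmetric with the same coordinate system and satisfies $S_4\wedge S_5\wedge\neg S_7$; and because $\mathcal{C}'$ never changes, the quantities $H,v,N,M,k$ stay fixed while $n$ increases by one with every step $\mathcal{T}$ takes beyond the right edge of $s.rect(\mathcal{C}')$. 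Hence after finitely many moves $n\ge\max\{N+1,H+1,k\}$, i.e.\ $S_6$ is true, which gives item~(3).

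The one point this sketch leaves to verify --- and the one I expect to be the real work --- is that $S_3$ survives until $S_6$ is reached, since the right-hand side $\max\{N,n\}+2$ of $S_3$ grows with $n$. Concretely one needs $m\ge\max\{N+3,H+3,k+2\}$ at the entry to Phase~3; I would derive this from the conjunction $S_3\wedge S_4$ that Phase~1 leaves behind (using $M\ge N$, the bound $H\le n$ available there, and a direct check of the few degenerate small-pattern cases), or, failing a clean bound, argue that a transient failure of $S_3$ merely returns the algorithm to Phase~1, whose upward tail-moves restore $S_3$ without disturbing $S_5$, $S_6$, or $\neg S_7$, after which $\mathcal{T}$ need never move right again. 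Everything else reduces to the two structural facts that a horizontal move of the tail changes neither $\mathcal{C}'$, $\mathcal{C}''$, nor the head, and that it alters only the tail's single $1$ inside the four boundary strings.
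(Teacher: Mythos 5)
Your argument for the $\neg S_8$ branch is essentially the paper's own: a case split on whether the tail's rightward step enlarges $s.rect$ (using $S_3$, i.e.\ $m\ge n+2$, to keep the rectangle non-square so only four boundary strings need comparing), the observation that a horizontal tail move leaves $\mathcal{C}'$, $\mathcal{C}''$ and the head untouched so $S_4\wedge S_5\wedge\neg S_7$ persists, monotone growth of $n$ until $S_6$ holds, and the fallback to Phase~1's upward tail moves when $S_3$ transiently fails (the paper states this fallback explicitly in its Case-1). In fact you are more careful than the paper on the string comparisons: your use of $\neg S_8$ to locate the first disagreement between $\lambda_{AB}$ and $\lambda_{BA}$ in a row strictly below the tail's row, and your note that the tail is the unique robot on the top side, make precise what the paper only asserts ("we can calculate that...").

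The one genuine omission is that you scope the entire proof to $\neg S_8$, whereas Phase~3's guard is $S_3\wedge S_4\wedge S_5\wedge\neg S_6\wedge\neg S_7$ and does not exclude $S_8$; the paper's proof devotes its second half to the $S_8$-true branch. There, when $\mathcal{C}'$ has a vertical symmetry line $V$ and the tail lies to the left of $V$, the prescribed move is \emph{leftward} past $D$, after which the coordinate system flips ($B$ becomes the origin) before the rightward marching resumes. Your item~(1) claim that "the coordinate system remains unchanged'' and your item~(3) termination argument are both stated only for rightward moves and do not cover this branch; without it the theorem is proved only for a sub-case of Phase~3. (It is worth noting the paper's own statement of item~(1) sits awkwardly with its Case-2 under $S_8$, where the coordinate system is explicitly changed, so closing this gap requires either restricting the claim as you did but saying so as a hypothesis, or extending your string-comparison machinery to the reflected frame.)
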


\begin{proof}
Let the smallest enclosing circle at time $t$ is $ABCD$, where $\lambda_{AB}$ is the largest string. After one move by the tail rightward there may arise two cases.

\paragraph*{Case-1:} Suppose now tail robot is at C, then by one move of tail the new $s.rect$ is $APQD$, where $|AP|$= $(n+1)$. Now it is easy to check that as $m\ge n+2$ so $m>n+1$. So we get that $AD>AP$. This implies that the new configuration is still not square, so we have to consider here only four binary strings, and as earlier $\lambda_{AP}$ will be larger string. So we can conclude that the configuration is still asymmetric and the coordinate system is not changed by one move of the tail. It is easy to check that $S_4$ and $S_5$ are true here but not $S_7$. After the movement of the tail, $S_3$ may become false, so we are in phase 1 then. Then the tail moves upwards and one upwards move still has $S_3 \land S_4 \land S_5 \land  S_6 \land \neg S_7$ is true.\\
\paragraph*{Case-2:} Let after one move by the tail the smallest enclosing circle remain unchanged. As in this phase, the head is in origin and the tail has moved until $S_4$ true, in the binary string of $CD$ or $DC$ is smaller than $AB$. Also in this phase, $S_8$ is not true. So we must have $AB$ larger string than $BA$, so finally we get $\lambda{_{AB}^{new}}$ $>$ $\lambda{_{BA}^{new}}$.
 Note that $S_8$ is either true or false in phase 3 by a finite number of moves of the tail the configuration remains asymmetric.

If $S_8$ is true then there may happen two cases:
 
     \begin{figure}[ht]
     \centering
     \includegraphics[width=.45\linewidth]{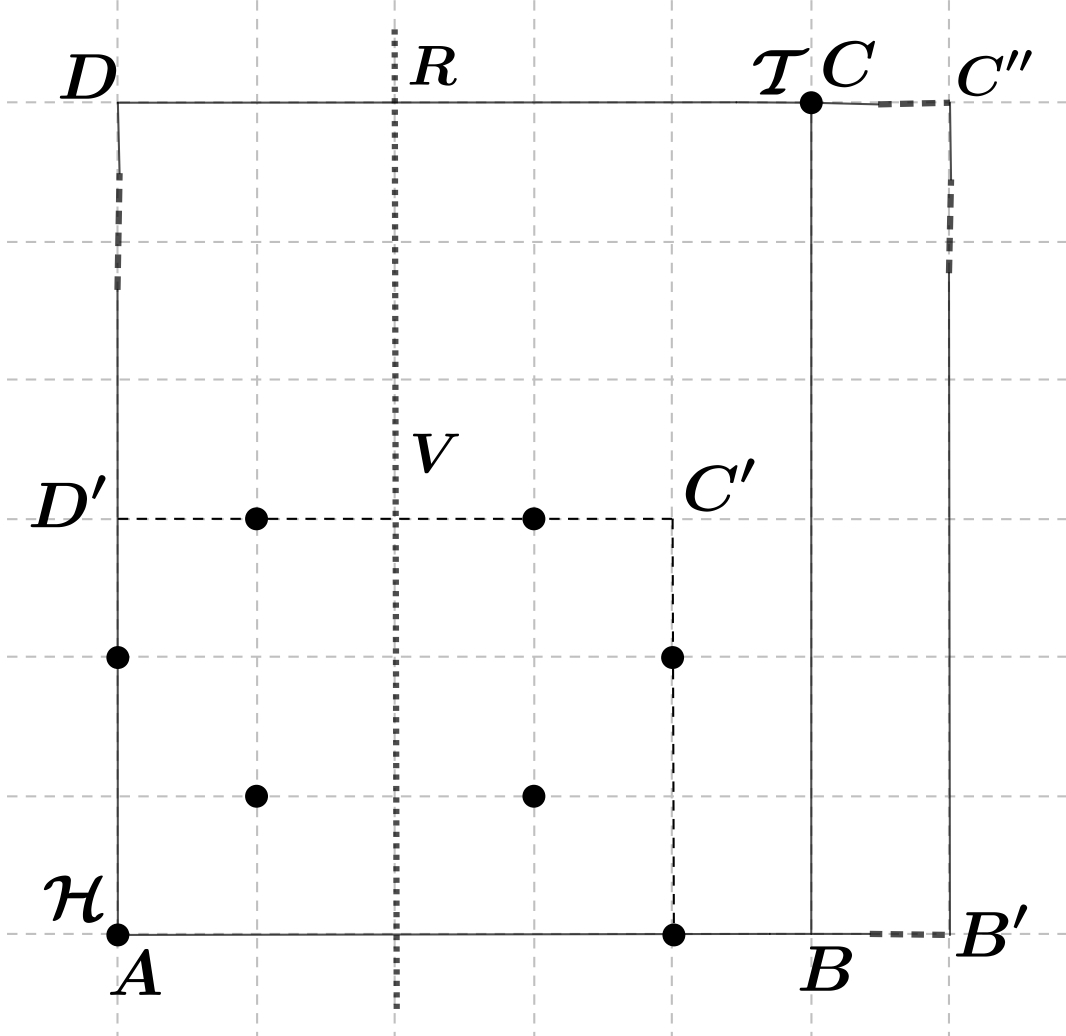}
     \caption{Case-1: $\mathcal{C'}$ has a vertical symmetry and tail will move rightwards }
     \label{Fig:2}
     \end{figure}

    \begin{figure}[ht]
    \centering
    \includegraphics[width=.5\linewidth]{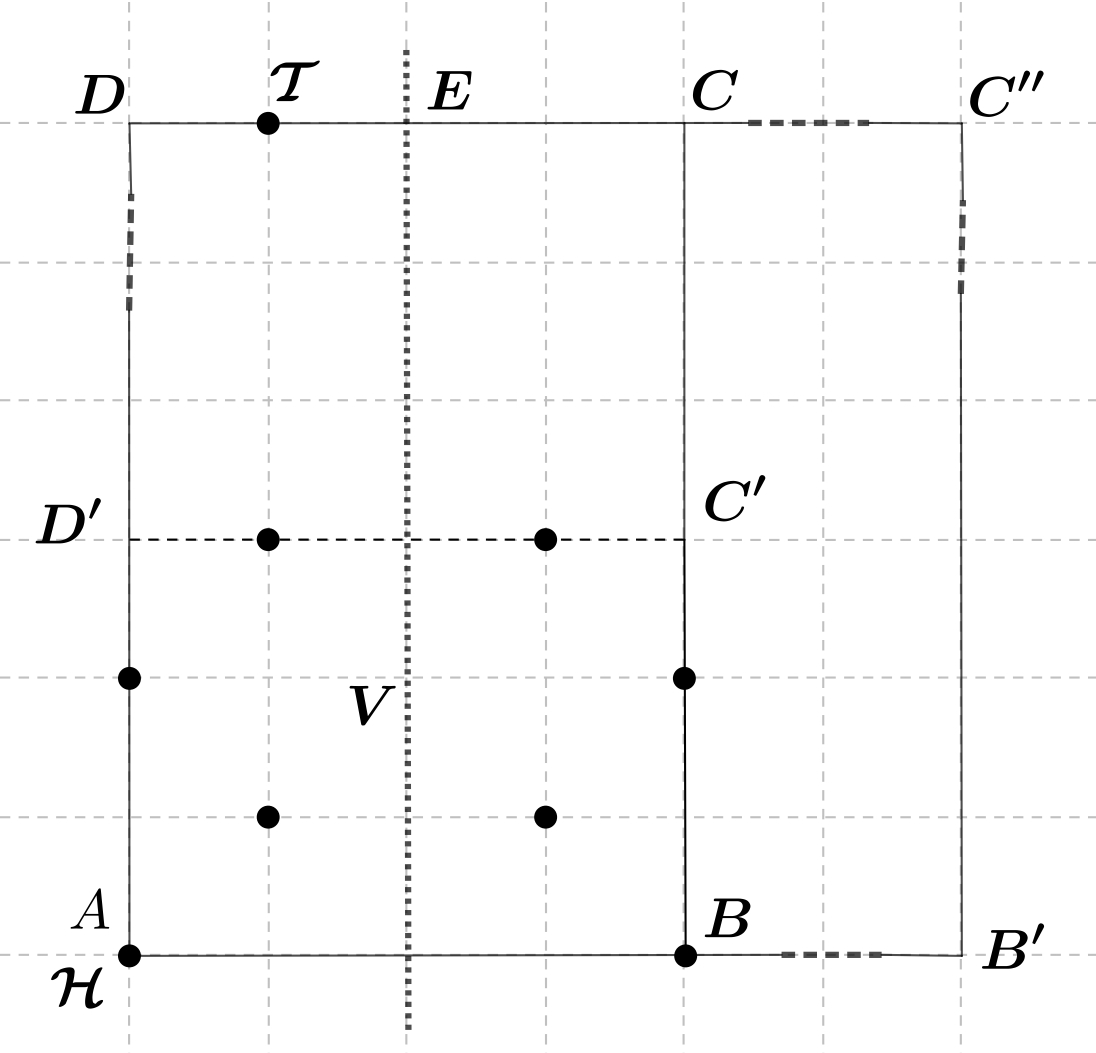}
    \caption{Case-2: $\mathcal{C'}$ has a vertical symmetry and tail will move leftwards }
    \label{Fig:2.a}
    \end{figure}
    
\paragraph*{Case-1:} there is a vertical symmetry in $\mathcal{C'}$ but if we consider $ABCD$ then the tail is rightward of vertical symmetry line $V$. Then tail will move rightwards and after a finite number of moves, $S_6$ is true.
\paragraph*{Case-2:} When the tail is in the left portion with respect to the vertical symmetric line $V$. Then tail will not move rightwards. It will move to the left one more step than $D$, then the coordinate system will be changed. $B$ will be then origin and x-axis = $BA$ and y-axis = $BC$. Then the case will be the same as case-1.

\end{proof}

\paragraph*{Phase 4:} In this phase the  configuration satisfies $S_3 \land S_4 \land S_5 \land S_6 \land \neg S_7 \land \neg S_{10}$ = true. Other than the tail and one inner robot, all other inner robots form a line on the x-axis. In phase four, the head is in origin, and all the robots on the x-axis first make the line compact, i.e. there is no empty grid point between two robots. After the x-axis become compact, when a robot $r_i$ is on $H_i$ and there are no robots in between $H_i$ and the x-axis and the right part of $r_i$ is empty in its horizontal line, then the robot moves to the x-axis. This procedure is done one by one by robots. In between this movement, no collision will occur. Finally when a robot sees that except for itself and tail all other inner robots are on the x-axis then it will not move to the x-axis. So all the inner robots other than the tail and one inner robot form a line on the x-axis.\\


 
\begin{theorem}
If we have an asymmetric configuration $\mathcal{C}$ such that $S_3 \land S_4 \land S_5 \land S_6 \land \neg S_7 \neg S_{10}$ = true  then in phase 4 after finite number of moves by the robots $S_{10}$ becomes true and phase 4 terminates.
\end{theorem}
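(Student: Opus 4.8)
The plan is to treat this as an \emph{invariant-plus-monovariant} argument for Phase~4. Throughout Phase~4 the head stays at the origin (this is $S_5$), the tail never moves and, after Phase~3, is the unique robot on the topmost row and the rightmost robot overall, and every action is taken by an inner robot and is either (i) a bottom-row robot stepping one unit left into an empty cell, so as to make the $x$-axis segment compact, or (ii) the robot selected by the Phase~4 rule --- lowest non-empty row above the $x$-axis, rightmost robot on that row, empty column beneath it --- stepping one unit down toward the $x$-axis. First I would show that the predicate $S_3\wedge S_4\wedge S_5\wedge S_6\wedge\neg S_7$ defining Phase~4, together with the asymmetry of the configuration and the agreed global coordinate system, is preserved by every such move; then I would exhibit a quantity that strictly decreases at every move and is bounded below, so that after finitely many moves no move is enabled, and argue that this terminal state is exactly $S_{10}$.

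For the invariant: since the tail is frozen and is the rightmost robot, the width $n=|AB|$ and the height $m=|AD|$ of $s.rect(\mathcal C)$ stay constant and every other robot stays in the columns $0,\dots,n-1$; as the vertical extent of $\mathcal C'$ can only shrink when a robot descends, $S_3$, $S_4$ and $S_6$ remain true, $S_5$ is untouched, and $\neg S_7$ is maintained. For asymmetry and the coordinate system I would reuse the reasoning of Theorem~\ref{Theorem_1} and the subsequent theorems of this section: $S_3$ makes $s.rect(\mathcal C)$ a non-square rectangle, so only the four strings $\lambda_{AB},\lambda_{BA},\lambda_{CD},\lambda_{DC}$ compete; $S_3$ and $S_4$ keep the tail isolated far above everything else, so the two strings read from the top side are dominated (only the tail lies there, so after at most one leading $1$ they are zero until a much lower row); and because $S_6$ gives $n\ge k$ while the compact bottom-row segment has fewer than $k$ robots, the corner $(n,0)$ is always empty, so $\lambda_{BA}$ starts with $0$ whereas $\lambda_{AB}$ starts with $1$ (head at the origin). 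Hence $\lambda_{AB}$ stays the unique lexicographically largest string, $A$ stays the leading corner, and the configuration stays asymmetric with the same agreed axes.

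For termination and progress I would use the lexicographically ordered triple $\Phi(\mathcal C)$ whose first entry is the number of inner robots lying strictly above the $x$-axis, whose second entry is the sum of the heights of those robots, and whose third entry is the sum of the $x$-coordinates of the robots on the $x$-axis. A move of type (i) leaves the first two entries fixed and strictly decreases the third; a move of type (ii) leaves the first entry fixed and strictly decreases the second, except when it brings a robot onto the $x$-axis, in which case the first entry drops by one. Since $\Phi$ ranges over a well-ordered set and is bounded below --- the retention clause of Phase~4 forbids the descent of the last inner robot left above the $x$-axis --- only finitely many moves can occur, and when none is enabled the bottom row is compact and only the tail and (at most) one inner robot are off the $x$-axis, i.e.\ $S_{10}$ holds and Phase~4 terminates. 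Collision-freeness is checked in parallel: a type-(i) move is made one robot at a time into a cell just observed to be empty, and a type-(ii) robot walks straight down a column the selection rule certifies to be empty down to the $x$-axis and arrives at a cell to the right of (or just extending) the current compact segment, so no two robots ever occupy the same grid point.

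The step I expect to be the real obstacle is exactly this joint collision-and-coordinate bookkeeping for the descent: proving that the $x$-axis cell at which a descending robot arrives is empty, that its arrival extends the compact segment without creating a reflectional symmetry, and that the leading corner does not migrate away from $A$ during the intermediate non-compact configurations. This rests on the invariants about which columns the off-axis robots occupy that Phases~1--3 are designed to establish, and on the numerical slack built into $S_3$, $S_4$ and $S_6$; it is the same flavour of case analysis already carried out in the proofs of the preceding theorems, so the real work is in assembling those cases carefully rather than in any new idea.
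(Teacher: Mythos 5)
Your proposal follows essentially the same route as the paper's own proof: leftward compaction of the $x$-axis segment plus a sequential, one-robot-at-a-time descent of the rightmost robot on the lowest occupied line above the $x$-axis, with collision-freeness coming from that ordering and with one inner robot deliberately retained off the axis so that the terminal state is exactly $S_{10}$. The explicit monovariant $\Phi$ and the string-comparison argument for preserving asymmetry and the coordinate system are refinements that the paper's (quite informal) proof omits rather than a different approach, and the difficulty you flag --- pinning down exactly where a descending robot lands on the $x$-axis --- is glossed over in the paper as well, which only asserts that ``a robot always gets a path.''
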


 \begin{figure}[ht]
    \centering
     \includegraphics[width=.55\linewidth]{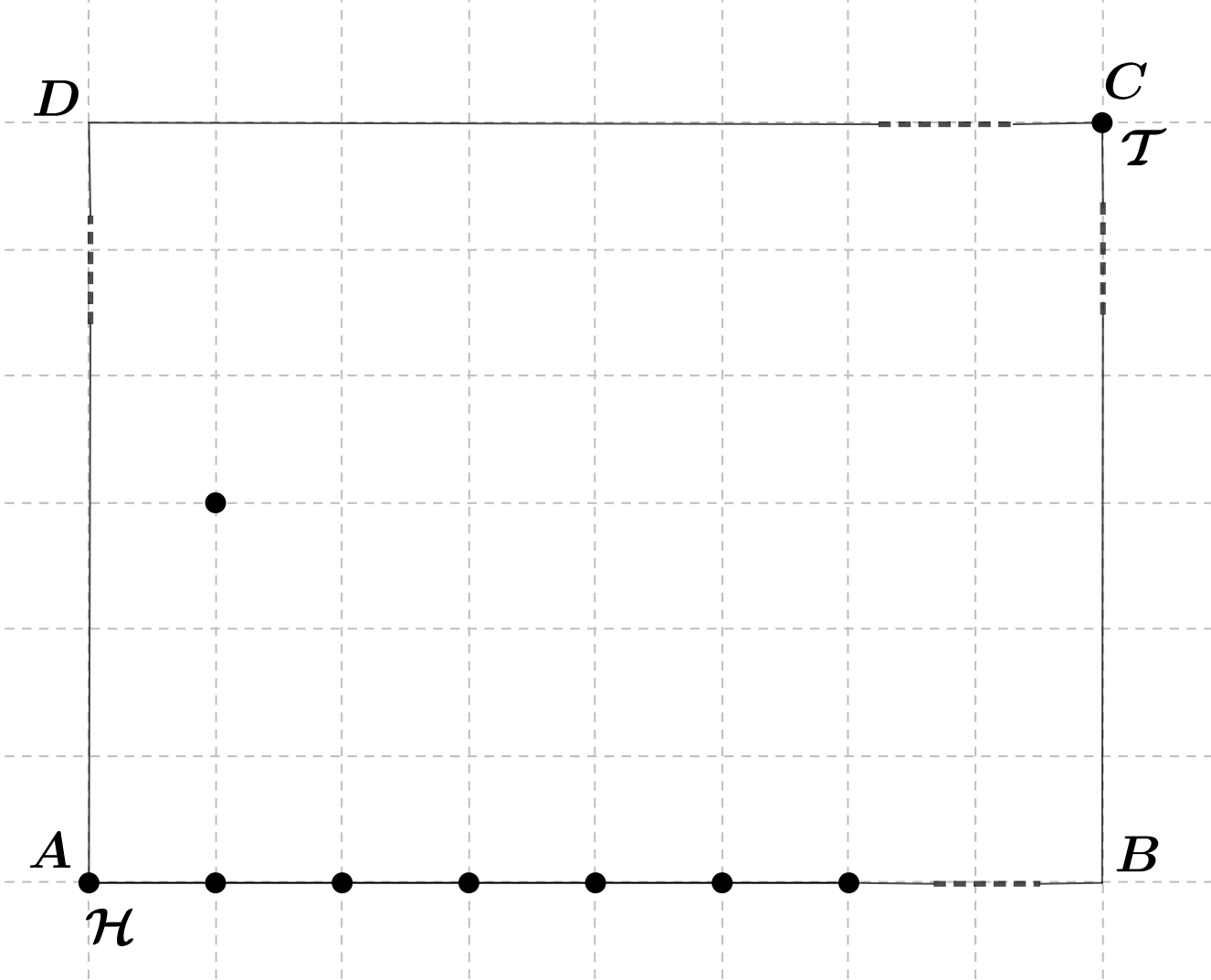}
     \caption{Line formation of robots on x axis without tail and last inner robot. }
     \label{Fig:4}
    \end{figure}

\begin{proof}
In previous phases when the tail robot expands the smallest enclosing rectangle and the head robot moves to the origin, then in this phase robots that are on the x-axis make the line compact. In this move, a robot will move to its left grid point if it is empty, so the robot's movement is in the left direction. So collision will not occur. A robot $r_i$ which is on a horizontal line (let $H_k$) first checks that all the down lines robots are on the x-axis or not, if yes then when a robot sees that the right side of its horizontal line has also no robots, it will move to the x-axis. In this way, robots move down to the x-axis one by one. So the movement of the robot is sequential here. A robot will not move until it sees that it is the rightmost robot in its horizontal line and there is no more robot in the down horizontal lines other than the x-axis. As the ${s.rect}$ is expanded by the tail robot in the previous phases, a robot always gets a path in the grid and moves to the x-axis. So collision will not occur in this movement. Finally without the tail and one inner robot, after a time all other robots will form a line on the x-axis. Hence $S_{10}$ is true. 
\end{proof}

\paragraph*{Phase 5:} In this phase, inner robots will move to the fixed target one by one. When one inner robot sees that all other robots without itself and the tail are on the x-axis then it moves to $t_{k-2}$. We call this inner robot as \textit{Last inner robot}. When a robot on the x-axis sees that the \textit{Last inner robot} is at its target position then $i^{th}$ robot from the left on the x-axis will reach to $t_i$ target position when it sees that from $t_{i+1}$ to $t_{k-2}$ positions are occupied. \\

 \begin{figure}[ht]
    \centering
     \includegraphics[width=.6\linewidth]{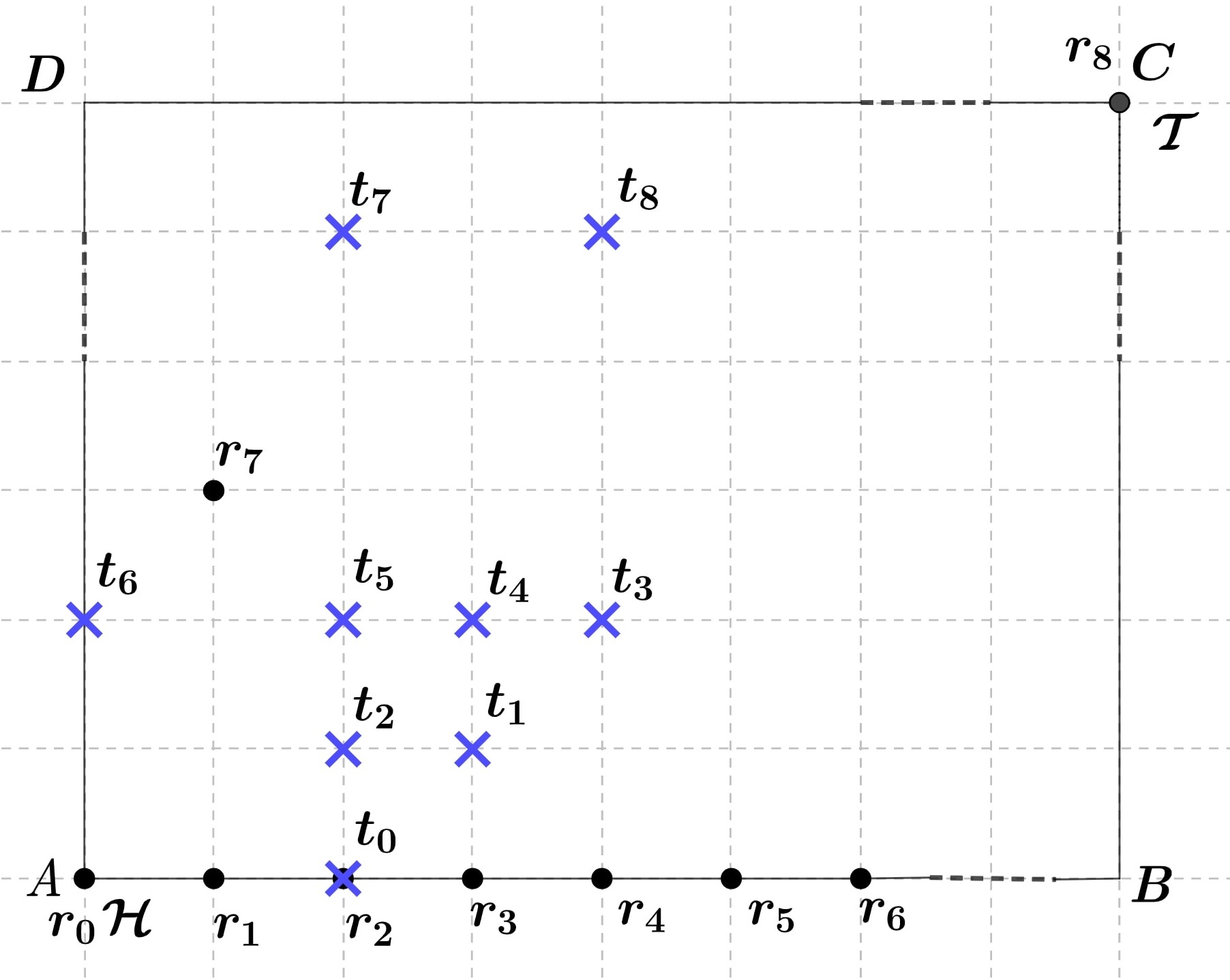}
     \caption{Target formation without head and tail. }
     \label{Fig:5}
    \end{figure}

\begin{theorem}
If we have an asymmetric configuration initially at some time t then
\begin{enumerate}
    \item By movement of inner robots the new configuration is still asymmetric and the coordinate system remains the same.
    \item After any move of the inner robots in this phase $\mathcal{C}''=\mathcal{C}''_T$ and phase 5 terminate and $S_7$ becomes true.
\end{enumerate}
\end{theorem}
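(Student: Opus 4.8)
For the final statement (the Phase 5 theorem), here is how I would proceed.

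The plan is to lean on two facts that persist through Phase 5: the head never leaves the origin, the tail never moves, and every inner robot only ever travels to one of the $k-2$ target points $t_1,\dots,t_{k-2}$. For the bookkeeping, there are exactly $k-2$ inner robots (the head and tail being the two distinguished robots) and exactly $k-2$ points in $\mathcal C_T''=\{t_1,\dots,t_{k-2}\}$, so once every inner robot has reached its assigned $t_i$ and nothing else has moved we get $\mathcal C''=\mathcal C_T''$, i.e. $S_7$ holds and Phase $5$ terminates. Hence the second bullet reduces to showing the moves make progress and are collision-free, which I address last.

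For the first bullet I would first argue $s.rect(\mathcal C)$ is frozen. On entering Phase 5 we have $S_3\wedge S_4\wedge S_5\wedge S_6$, i.e. $m\ge\max\{N,n\}+2$, $m\ge 2\max\{M,v\}$ and $n\ge\max\{N+1,H+1,k\}$; consequently every robot except the tail has abscissa $<n$ and height $<m$ — an $x$-axis robot has abscissa $\le\max\{k-3,N\}<n$, a placed robot sits at a target point with coordinates in $[0,N]\times[0,M]$, and the still-unplaced last inner robot lies in $s.rect(\mathcal C')$, whose sides are $H\le n-1$ and $v\le m/2$. So every inner-robot move keeps all robots inside the rectangle $ABCD$ whose corner $A$ is the head (at the origin) and whose opposite corner $C$ is the tail (brought there in Phases 1 and 3); thus $ABCD$ never changes and, by $S_3$, stays non-square, so only $\lambda_{AB},\lambda_{BA},\lambda_{CD},\lambda_{DC}$ compete. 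Now $\lambda_{BA}$ and $\lambda_{DC}$ start with $0$ (their corners $B=(n,0)$ and $D=(0,m)$ are empty) while $\lambda_{AB}$ starts with $1$ (head at $A$), so only $\lambda_{CD}$ — which starts with $1$ because the tail sits at $C$ — could rival $\lambda_{AB}$; but after that leading $1$ it runs through a long block of $0$'s, since rows $M+1,\dots,m-1$ are vacant and $m\ge 2M$, whereas $\lambda_{AB}$ meets a second $1$ while still scanning rows $0,\dots,M$, which always contain the head together with at least one more robot. Hence $\lambda_{AB}$ stays the unique lexicographically largest string, so the leading corner, the origin and the axes are unchanged and the configuration stays asymmetric after each move.

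It remains to prove progress and collision-freeness. The activation rule well-orders the moves: the off-axis robot first goes to $t_{k-2}$, and thereafter the $i$-th robot from the left on the $x$-axis becomes enabled only when $t_{i+1},\dots,t_{k-2}$ are all occupied, so at each instant at most one inner robot is enabled and the enabled index decreases from $k-3$ down to $1$. Since every moving robot first rises off the $x$-axis and then crosses rows that are vacant (by the slack in $S_3,S_4,S_6$) to a still-empty $t_i$ whose higher-indexed targets are already filled, a collision-free route always exists, and under the instantaneous-move (or wait-on-an-edge) convention no two robots ever share a point. Each of the finitely many inner robots makes finitely many steps, so after finitely many moves $t_1,\dots,t_{k-2}$ are all occupied. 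The genuine obstacle, I expect, is the asymmetry invariant at the intermediate configurations — precisely the uniform comparison $\lambda_{AB}>\lambda_{CD}$: as the compact $x$-axis line erodes into the possibly-sparse target row $H_0$, the crude ``row $0$ is almost fully occupied'' argument weakens, and one must cash in the gap, forced by $S_4$, between the band of rows $0,\dots,M$ carrying the whole pattern and the lone tail at height $m\ge 2M$.
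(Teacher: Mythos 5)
Your proposal follows the same skeleton as the paper's proof: freeze the head at the origin and the tail at the expanded corner, argue that inner-robot moves cannot disturb the coordinate system or the asymmetry, and then get termination and collision-freedom from the sequential, index-decreasing order in which robots claim $t_{k-2},t_{k-3},\dots,t_1$. The difference is one of rigor rather than route: the paper disposes of the asymmetry invariant in a single asserted sentence (``as the head is in origin and the tail expands the $s.rect$, the coordinate system will not change''), whereas you actually carry out the comparison of the four strings, observing that $\lambda_{BA}$ and $\lambda_{DC}$ die at their first character and that the only real contest is $\lambda_{AB}$ versus $\lambda_{CD}$, settled by the vacant band of rows between $\max\{M,v\}$ and $m$ guaranteed by $S_4$. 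That elaboration is genuinely the missing content of the paper's proof, and you correctly identify the delicate spot: when $m=2\max\{M,v\}$ exactly, the count of blank cells above the occupied band only ties the count below it, so the bound $h^*\le m/2$ alone does not close the comparison and one must additionally use that row $0$ (or at least the low rows near the head) contributes its second $1$ early --- e.g.\ that the head is never alone on the $x$-axis until the last inner robot has departed. Apart from needing that half-step made explicit, your argument is sound and, if anything, more complete than the one in the paper.
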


\begin{proof}
As the head is in origin and tail robot expands the ${s.rect}$ of the initial configuration, so by the movement of inner robots coordinate system will not change and configuration remains asymmetric. Here our main concern is collision. In this phase an inner robot $r_i$ moves to target position $t_i$ when $t_{i+1}$ to $t_{k-2}$ positions are occupied by robots. As of last inner robot moves to $t_{k-2}$ at the start of this phase. Let us denote the robots on the x-axis from left to right as $r_0,r_1,\dots,r_{k-3}$. Then firstly $r_{k-3}$ reaches $t_{k-3}$, then $r_{k-4}$ reaches $t_{k-4}$ and so on. Finally, $r_1$ moves to $t_1$. As the movement in this phase is sequential that is no other robot moves until one moving robot reaches its target position. Also since the target positions are ordered in such a way that every inner robot will find a  unblock path to reach its target position. So here no collision will occur. Eventually, each inner robot reaches its target position. So finally $S_7$ is true.

\end{proof}

\paragraph*{Phase 6:}  The tail will move to the left until the x-coordinate matches with the $tail_{target}$. In this phase, the tail will move to make $S_2$ true.\\

\begin{figure}[ht]
    \centering
     \includegraphics[width=.5\linewidth]{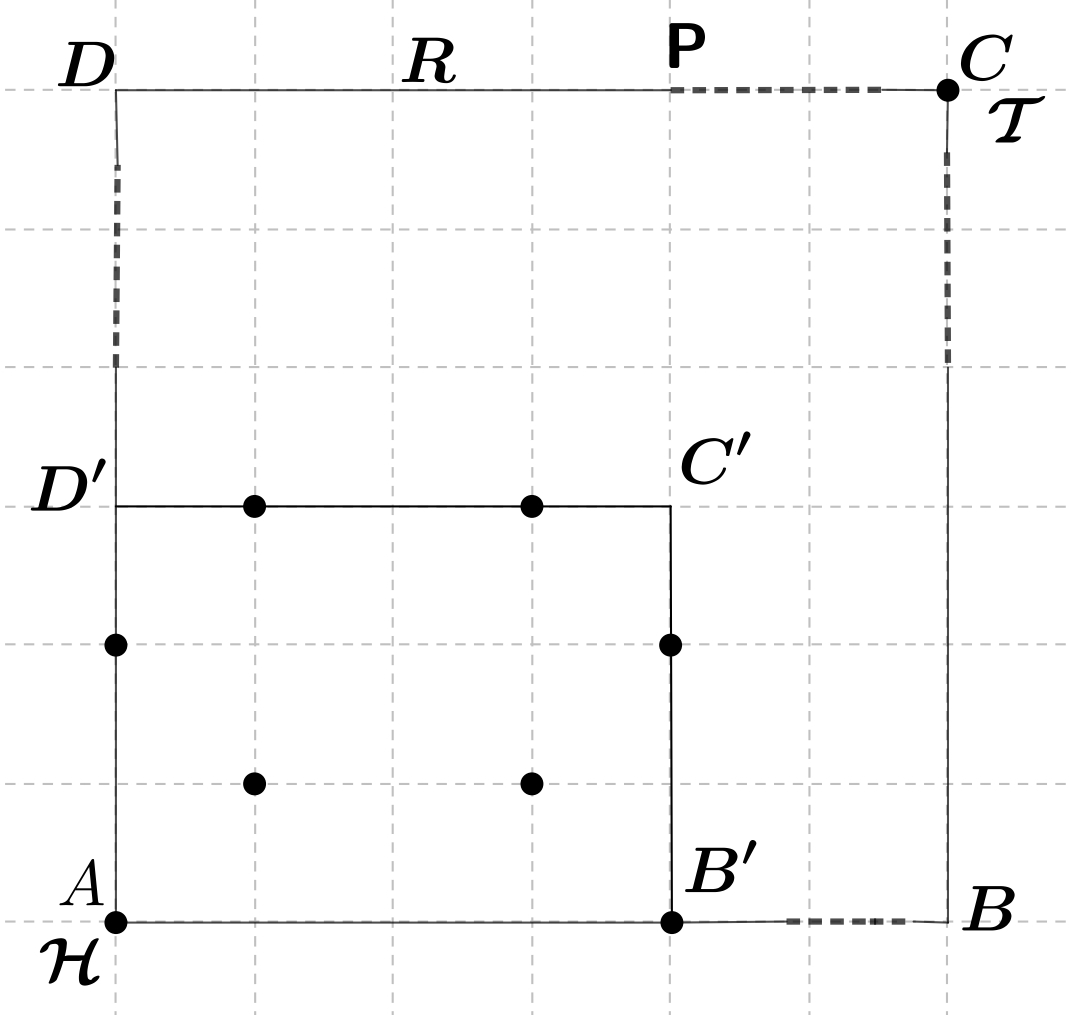}
     \caption{When $S_8$ is true in phase 6}
     \label{Fig:6}
    \end{figure}
\begin{theorem}
If we have an asymmetric configuration $\mathcal{C}$ at some time t then after a finite number of moves phase 6 terminates and $S_2$ becomes true.
\end{theorem}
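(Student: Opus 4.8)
The plan is to track what the configuration looks like at the start of Phase 6 and then argue that the tail's leftward moves are well-defined, collision-free, preserve the global coordinate system (in particular keep $A$ as the leading corner and the $x$-axis along $AB$), and terminate exactly when the tail's $x$-coordinate equals that of $tail_{target}$. First I would record the invariants inherited from Phases 1--5: the head $\mathcal{H}$ is at the origin ($S_5$), all inner robots sit at their target positions so $\mathcal{C}''=\mathcal{C}''_T$ holds ($S_7$), and the smallest enclosing rectangle is still "wide" in the sense guaranteed by $S_3\wedge S_4\wedge S_6$ because the tail had previously expanded it. In this situation every robot except the tail lies in the strip determined by $\mathcal{C}''_T$, whose horizontal extent is some fixed width, and the tail is the unique robot causing the last nonzero entry of $\lambda_{AB}$.

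Next I would analyse a single leftward step of the tail. Because $S_7$ already holds, the positions $t_1,\dots,t_{k-2}$ (the inner target positions) are occupied and the head is at the origin; the tail is the only robot that still needs to be repositioned, and its target $x$-coordinate is that of $t_{k-1}=tail_{target}$. I claim the grid point immediately to the left of the tail is empty: the tail currently sits on (or above, depending on the phase-5 bookkeeping) the right side of $s.rect(\mathcal{C}'')$'s bounding box pushed far enough right by the earlier expansion, so as long as the tail's $x$-coordinate strictly exceeds the target $x$-coordinate there is free space, and the inner robots all lie to the left in the target block — hence no collision. For the coordinate-system invariance I would reuse the argument pattern of Theorems for Phases 1--3: since $\lambda_{AB}$ is lexicographically largest and the first nonzero entry (the head at the origin) is unchanged, moving the tail left only changes late entries of the strings; I would check the handful of cases (tail stays inside the current rectangle; tail moving left shrinks the rectangle's width; possible parity/symmetry subtleties analogous to the $S_8$ case, illustrated in Figure~\ref{Fig:6}) and conclude $\lambda_{AB}^{new}$ remains strictly largest, so the configuration stays asymmetric and $A$ remains the origin with the axes fixed.

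Finally I would argue termination: each activation of the tail strictly decreases its $x$-coordinate by one, the target $x$-coordinate of $tail_{target}$ is a fixed finite value, and the tail cannot overshoot because the algorithm halts the tail's motion precisely when the $x$-coordinates agree (making $S_2$ true); under the asynchronous scheduler the tail is activated within each epoch, so after finitely many moves $S_2$ holds and Phase 6 terminates. The main obstacle I expect is the coordinate-system-invariance step: one must rule out the possibility that, as the tail moves left and the bounding rectangle possibly becomes a square or develops a reflectional symmetry, a different corner could become the leading corner or the configuration could momentarily become symmetric. Handling this cleanly requires carefully using the hypothesis that the inner robots already form the asymmetric block $\mathcal{C}''_T$ together with the fixed head at the origin, so that the string $\lambda_{AB}$ is forced to dominate throughout; this is the case distinction I would spend the most care on, paralleling the $S_8$-based reasoning of Phase 3.
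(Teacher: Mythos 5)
Your overall skeleton (invariants inherited from Phases 1--5, one leftward step at a time, monotone decrease of the tail's $x$-coordinate, hence termination) matches the paper's intent, but there is a genuine gap at exactly the point you flag as the hard step, and your proposed resolution of it is wrong. You plan to show that $\lambda_{AB}$ ``is forced to dominate throughout'' so that the configuration \emph{stays asymmetric}, and you justify this by appealing to ``the hypothesis that the inner robots already form the asymmetric block $\mathcal{C}''_T$.'' But $\mathcal{C}''_T$ (and more relevantly $\mathcal{C}'$, the configuration without the tail) need not be asymmetric at all: the target pattern is arbitrary and may have a non-trivial vertical reflectional symmetry --- this is precisely the condition $S_8$ that the algorithm tracks. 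When $S_8$ holds, the tail is the only robot breaking the symmetry, and as it moves left along the side $CD$ it can cross the axis of symmetry $V$; at that moment the full configuration genuinely becomes symmetric, so no argument can establish that $\lambda_{AB}$ remains strictly largest. Asymmetry preservation simply fails here, and a proof built on it collapses in this case.

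The paper's proof takes a different route for this case (its Case-2): it locates the critical point $R$ on $CD$ symmetric to the corner of the target's bounding box, argues that the tail's destination cannot lie in the interval $[P,R]$ (otherwise $B'$ would have been the leading corner of the target, contradicting how the target's coordinate system was fixed), and then \emph{accepts} that a vertical symmetry is created once the tail crosses $R$ --- resolving the resulting ambiguity not through asymmetry but through the fact that $S_1$ ($\mathcal{C}'=\mathcal{C}'_T$) already holds, so the configuration is still correctly interpretable by the robots in the later phases. To repair your proof you would need to replace the ``asymmetry is preserved'' claim in the $S_8$-true case with an argument of this kind: identify which conditions ($S_1$, $S_2$, the head's position at the origin) remain decidable from a symmetric snapshot and show they suffice to keep the tail's motion and the subsequent phases well-defined. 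The rest of your plan (collision-freeness, since the tail sits on the top side $CD$ above all inner robots, and termination by strict decrease of the $x$-coordinate) is fine and consistent with the paper.
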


\begin{proof}
In phase 6 if we have an asymmetric configuration, then depending on whether $S_8$ is true or not there are two cases. Let the smallest enclosing rectangle of $\mathcal{C}$ be ABCD where A is the origin and AB = x-axis and AD = y-axis. Let without tail the smallest enclosing rectangle is $AB'C'D'$. In this phase without tail, all the robots are now on $AB'C'D'$.\\

\paragraph*{Case-1:} Let $S_8$ is not true i.e. there is no symmetry in $\mathcal{C'}$. Then the tail robot will move on the CD side, no matter where $tail_{target}$ is AB will be lexicographically largest string. Finally tail will move up to $S_2$ will true.

\paragraph*{Case-2:} Let $S_8$ is true. Since there is symmetry in $\mathcal{C'}$, here the head robot is in A. Let $P$ be the point of intersection between $B'C'$ and CD. In this case when the tail robot moves left in the line CD, then it will move up to that point whose x coordinate is the same as $tail_{target}$. In this move by the tail when it will reach a point let $R$ which is in between $P$ and D then a vertical symmetry will be created. The tail robot's destination can not be [$P$,$R$] because then $B'$ will be the head. So when the tail crosses $R$ there will be a vertical symmetry. In this case also $S_1$ hold. When the tail robot moves left in CD then in other cases coordinate system remains invariant and $S_2$ holds.\\
So in both cases when phase 6 terminates then $S_2 \land S_3 \land S_4 \land S_5 \land S_7$ true.
\end{proof}

\paragraph*{Phase 7:} The aim of this phase is that Head will moves to $head_{target}$. Consider a configuration that is asymmetric and in phase 7, then with respect to the global coordinate system as fixed, let $ABCD$ is the smallest enclosing rectangle and $\lambda_{AB}$ be the lexicographically largest string. Clearly head is on the side $AB$ and the tail is on $CD$. If we mark all the target points on the grid then let the smallest rectangle be $AB'C'D$. Let $head_{target}$ be the  final position of head, then by finite move head will move to $head_{target}$.\\

\begin{theorem}
Let $\mathcal{C}$ be the asymmetric configuration at time t in phase 7 then by a finite number of moves by the head to the right phase 7 terminate when $\neg S_0 \land S_1 \land S_2 \land S_9$ = true.
\end{theorem}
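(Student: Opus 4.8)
The plan is to treat phase~7 as the mirror image of phase~2: there the head walked left to the origin, here it walks right to $head_{target}=t_0$, so the same lexicographic machinery that kept the configuration asymmetric and the frame fixed should transfer. First I would record the hypotheses inherited from the termination of phase~6, namely $S_2\land S_3\land S_4\land S_5\land S_7$. Hence at entry the head sits at the origin $A$, every inner robot already occupies its designated target (this is $S_7$, i.e. $\mathcal{C}''=\mathcal{C}''_T$), the tail lies on the top side $CD$ on the column whose $x$-coordinate equals that of $tail_{target}$ (this is $S_2$), and by $S_3\land S_4$ the enclosing rectangle is wide and tall enough to leave the head room to travel.

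Next I would establish collision-freeness and invariance of the frame along the head's rightward walk. Because the bottom-row target points are ordered left to right with $t_0$ first, every inner robot on the $x$-axis lies strictly to the right of $head_{target}$; thus the grid segment from $A$ to $t_0$ is unoccupied and the head reaches $t_0$ in finitely many unit steps without meeting another robot, the tail being confined to the top side. For each such step I would reuse the bookkeeping of the phase~2 theorem: since the first nonzero entry of $\lambda_{AB}$ is produced by the head, advancing it rightward keeps $\lambda_{AB}$ the unique lexicographically largest string, so the configuration stays asymmetric and the global coordinate system is unchanged. When the head lands on $t_0$ we obtain $\mathcal{C}'=\mathcal{C}'_T$, which is precisely $S_1$; $S_2$ is untouched because only the head moved, leaving the tail's column fixed; and $S_0$ fails since the full pattern is not yet realized. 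This already secures $\neg S_0\land S_1\land S_2$ together with the termination of the phase.

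It remains to establish the $S_9$ conjunct, $\mathcal{C}'''=\mathcal{C}'''_T$, i.e. that the set of all non-head robots coincides with the target minus its head. Here I would read off from $S_7$ that the inner robots already match the inner targets, so the whole burden falls on the tail: $S_9$ holds exactly when $\mathcal{C}\setminus\{head\}$ and $\mathcal{C}_T\setminus\{head_{target}\}$ agree, which I would certify by analysing the tail's position in the embedding $AB'C'D$ of the target fixed at the start of the phase. The hard part will be the coordinate-system change the text explicitly warns about: if the leftmost column empties while the head advances, the leading corner---and hence the embedding of the target---can shift, and I must verify that after any such re-anchoring the non-head robots still realize $\mathcal{C}'''_T$. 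Pinning down this re-anchoring---showing the frame either stays fixed or updates to an equivalent one under which $\mathcal{C}'''=\mathcal{C}'''_T$ reads true while the pattern remains incomplete---is the crux, and is where I expect essentially all of the difficulty of the proof to lie.
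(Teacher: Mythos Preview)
Your core strategy---treat phase~7 as the ``mirror'' of phase~2 and reuse that theorem's bookkeeping---has a genuine gap. In phase~2 the head moves \emph{left}, so the first~1 in $\lambda_{AB}$ slides to an earlier position and $\lambda_{AB}$ becomes strictly larger; that is the entire mechanism of the phase~2 proof. In phase~7 the head moves \emph{right}, so the first~1 in $\lambda_{AB}$ slides to a later position and $\lambda_{AB}$ becomes strictly \emph{smaller} at every step. Your sentence ``since the first nonzero entry of $\lambda_{AB}$ is produced by the head, advancing it rightward keeps $\lambda_{AB}$ the unique lexicographically largest string'' therefore does not follow; the premise makes $\lambda_{AB}$ decrease, not stay maximal. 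The paper's proof does not mirror phase~2 at all: it argues instead that because the head is moving toward $head_{target}=t_0$, which is the first position of the (weakly) largest string of the \emph{target} pattern, $\lambda_{AB}$ remains at least as large as the other corner strings throughout---and it explicitly allows a vertical symmetry to arise at the terminal step (this is then handled in phase~8). Your claim that ``the configuration stays asymmetric'' is thus too strong and contradicts the paper, which in the phase~8 proof writes ``the symmetry may arise in phase~7''.

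Your allocation of difficulty is also off. You identify the ``crux'' as a possible re-anchoring of the frame when the leftmost column empties, but this cannot happen: since $S_7$ holds, some inner robot already occupies a target on the left side $A'D'$ of the embedded target rectangle (the target's smallest enclosing rectangle must touch $x=0$), so column $x=0$ remains occupied after the head leaves the origin and the smallest enclosing rectangle does not shrink from the left. Conversely, the part you treat as routine---why $\lambda_{AB}$ dominates $\lambda_{BA}$ after the head has moved right---is exactly where the real work lies, and it needs the target-string argument the paper sketches rather than the reversed phase~2 reasoning you propose.
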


\begin{proof}
Let ABCD be the smallest enclosing circle of an asymmetric configuration $\mathcal{C}$ of phase 7, Here AB is the lexicographically largest string. Now we have to plot the smallest enclosing rectangle of the target pattern with respect to our current coordinate system. This phase aims to move the head robot from the origin A to its fixed target position, which will be in the right direction of A on AB. As there may be vertical symmetry in target configuration when the head moves to its target then also vertical symmetry will happen. So in all the cases, AB will be lexicographically larger string when the head robot moves to its target position. So when the head reaches its final position phase 7 terminates and $\neg S_0 \land S_1 \land S_2 \land S_9$ is true.
\end{proof}

\begin{figure}[ht]
    \centering
     \includegraphics[width=.5\linewidth]{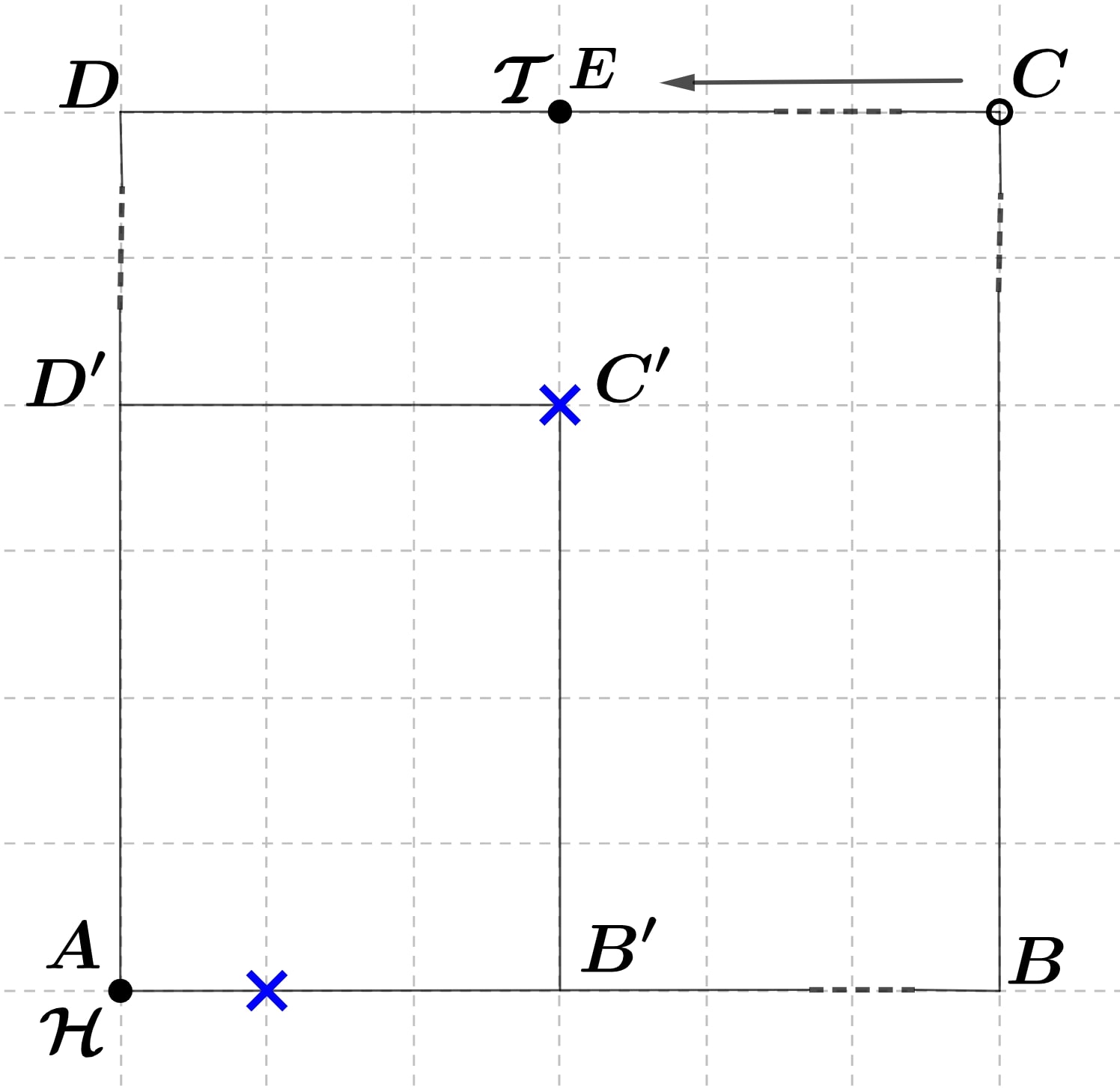}
     \caption{In phase 7 head move to it's fixed target and in phase 8 tail move to its target point.}
     \label{Fig:7}
    \end{figure}

\paragraph*{Phase 8:} Tail moves downwards up to $tail_{target}$. In this phase the position of tail will be upward of $tail_{target}$. So when in phase 8 tail will move downwards to the point $tail_{target}$. So when in phase 8 then $\neg S_0 \land S_1 \land S_2$ = true, but after tail move then $S_0$ is true.\\

\begin{theorem}
If we have a configuration $\mathcal{C}$ at some time t then after a finite number of moves by the tail $S_0$ becomes true.
\end{theorem}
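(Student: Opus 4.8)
The plan is to exploit that Phase~8 is entered only when $\neg S_0 \wedge S_1 \wedge S_2$ holds: every robot except the tail already occupies its designated target point, and the tail already shares its $x$-coordinate with $tail_{target}=t_{k-1}$; moreover, by the way Phase~7 passes control on, the tail sits strictly above $tail_{target}$ on that common vertical line. Hence the only discrepancy between $\mathcal{C}$ and $\mathcal{C}_T$ is the $y$-coordinate of the tail, and it suffices to establish three facts: (a) one downward step of the tail keeps the configuration asymmetric and leaves the agreed global coordinate system---hence the embedding of $\mathcal{C}_T$---unchanged; (b) no such step causes a collision; (c) iterating, the tail reaches $tail_{target}$ after finitely many steps (equal to the current vertical gap), and at that instant $\mathcal{C}=\mathcal{C}_T$, so $S_0$ holds and the algorithm terminates.

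For (a) I would reuse the corner-string analysis from the proof of Theorem~\ref{Theorem_1} and of the Phase~3 and Phase~6 theorems. The tail is the unique robot on the top side $CD$ of $s.rect(\mathcal{C})$, since every other robot lies in $s.rect(\mathcal{C}')=s.rect(\mathcal{C}_T')$, whose highest occupied line is strictly below the row of $tail_{target}$ (by the snake labelling of the target positions in Section~5.1, $t_{k-1}$ lies on the topmost occupied line of $\mathcal{C}_T$). A downward step of the tail either keeps the width of the enclosing rectangle and lowers its height by one, or---once the tail descends to the top of $s.rect(\mathcal{C}')$---makes the rectangle coincide with $s.rect(\mathcal{C}_T')$ with the tail on its boundary (or, in a degenerate case, a line, treated by the line-string comparison). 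In each case I would re-inspect the corner strings $\lambda_{AB},\lambda_{BA},\lambda_{CD},\lambda_{DC}$, together with the four vertical strings if the rectangle momentarily becomes a square, and check that $\lambda_{AB}$ stays strictly lexicographically largest: moving the tail downward only alters entries of these strings at positions lying strictly after the first $1$ of $\lambda_{AB}$, so $\lambda_{AB}$ cannot be overtaken, and any transient vertical reflective symmetry of $\mathcal{C}'$ is broken by the tail, which by $S_2$ is pinned in the column of $tail_{target}$---the same device used in Phase~6, Case~2 and in Phase~7. Consequently the leading corner $A$, the origin, and the axes are preserved, and with them the target embedding.

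For (b) and (c): because $S_1$ holds, the robots of $\mathcal{C}'$ sit exactly at the points of $\mathcal{C}_T'$, none of which lies strictly above the horizontal line through $tail_{target}$; in particular the part of $tail_{target}$'s column above it is empty. Hence the whole vertical segment the tail must traverse is free of robots and every unit downward step is collision-free. The tail starts at a finite height and decreases its $y$-coordinate by exactly one at each of its moves, so after finitely many moves it coincides with $tail_{target}$; at that moment all $k$ robots coincide with the $k$ points of $\mathcal{C}_T$, so $\mathcal{C}=\mathcal{C}_T$ and $S_0$ is true.

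The step I expect to be the main obstacle is (a)---rigorously ruling out that the shrinking enclosing rectangle ever lets a competing corner string tie or beat $\lambda_{AB}$, especially at the moment it first becomes a square (introducing four new candidate strings) and at the moment a vertical symmetry of $\mathcal{C}'$ is momentarily created. Both are controlled by the same two observations already used repeatedly in the earlier phases: a downward move of the tail never disturbs the first $1$ of $\lambda_{AB}$, and $S_2$ keeps the tail off any vertical symmetry axis of $\mathcal{C}'$. So this amounts to a careful case analysis rather than anything genuinely new.
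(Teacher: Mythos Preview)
Your proposal is more careful than the paper's argument in places (the collision analysis in (b), the enumeration of rectangle-versus-square sub-cases in (a)), but it rests on two assumptions the paper explicitly does \emph{not} make, and both can fail.

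First, you assume Phase~8 is reached only through Phase~7 and hence that the tail sits strictly above $tail_{target}$. The paper's Case~2 observes that $\neg S_0 \wedge S_1 \wedge S_2$ may already hold in the \emph{initial} configuration (i.e., $\mathcal{C}_I$ differs from $\mathcal{C}_T$ only in the tail's position), in which case $tail_{target}$ may lie above, level with, or below the tail, and a purely downward movement does not suffice.

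Second, and more seriously, your invariant (a) is not always true. You claim that any vertical reflective symmetry of $\mathcal{C}'$ ``is broken by the tail, which by $S_2$ is pinned in the column of $tail_{target}$''. But $S_2$ only forces the tail's $x$-coordinate to equal that of $tail_{target}$; nothing prevents $tail_{target}$ itself from lying on the symmetry axis of $\mathcal{C}'_T$. This occurs precisely when the target pattern has a vertical reflective symmetry, which the setup in Section~5.1 explicitly allows. In that situation, once $S_1$ holds, $\mathcal{C}'=\mathcal{C}'_T$ is symmetric about that axis and the tail sits on it, so the full configuration $\mathcal{C}$ is genuinely symmetric: $\lambda_{AB}=\lambda_{BA}$ and there is no unique largest corner string. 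The paper treats this as its Case~3, noting that the symmetry can arise during Phase~7 and arguing (informally, via $S_4$) that the tail can still recognise its downward target; your argument, as written, rules this case out on grounds that do not hold.
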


\begin{proof}
In this phase by a finite number of moves by the tail robot, the arbitrary pattern given as input will form. After phase 6 may be the configuration has symmetry or not.
\begin{enumerate}
    \item Let the configuration is asymmetric. Then the tail robot is now on the CD side, where ABCD is the smallest enclosing rectangle. Let $AB'C'D'$ be the $s.rect$ of the target configuration. Then $tail_{target}$ will be in the downwards vertical line of the tail. So when the tail robot moves down to its target position AB will be always lexicographically larger string. 
    \item Let the configuration is asymmetric but the position of $tail_{target}$ is on the upper side or in its horizontal line or downside. This is only possible when the initial configuration is the same as the target without the tail's position. In this case, the tail will move to its position. No symmetry will occur during this move of the tail.
    \item Let the configuration is symmetric. As the initial configuration is asymmetric the symmetry may arise in phase 7, so when there is a vertical symmetry then let ABCD be the $s.rect$ and AB and BA be the larger string, as $S_4$ is true here so the target position for tail will be the downside of its recent position and after it moves and reaches to its $tail_{target}$ then $S_0$ is true.
\end{enumerate}
\end{proof}

\begin{figure}[ht]
    \centering
     \includegraphics[width=.55\linewidth]{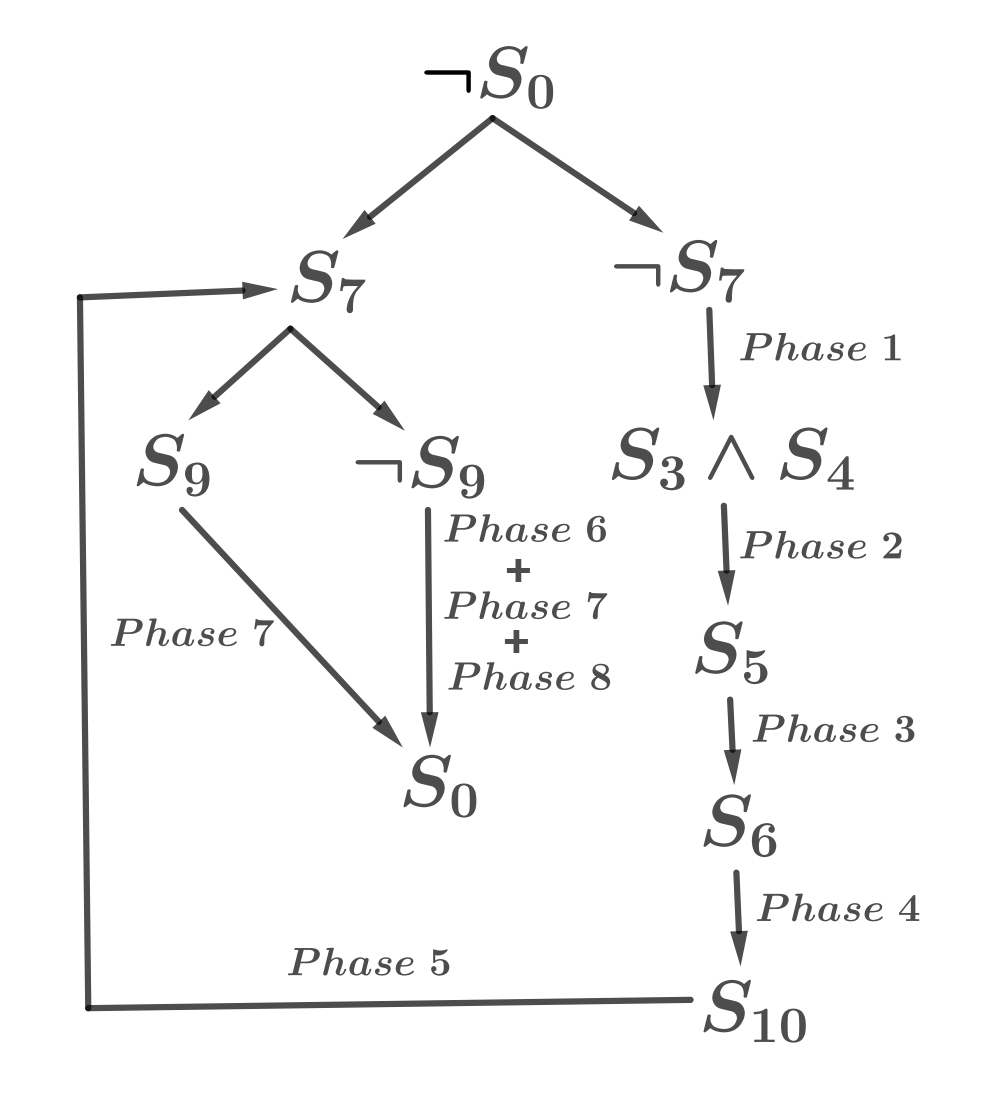}
     \caption{From $\neg S_0$ the algorithm terminates with $S_0$=true  }
     \label{Fig:8}
    \end{figure}
    
\subsection{Move Complexity of the algorithm}
In \cite{BoseAKS20} author proved that in an infinite grid arbitrary pattern formation by oblivious robots the optimal move is $O(\mathcal{D}')$ where $\mathcal{D}'=\max\{\mathcal{D},k\}$, where $\mathcal{D}$ is the side of the smallest square which can contain both initial and target configuration, that is, $D=\max\{AB,CD,A'B',C'D'\}$ and, $k$ is the number of robots. Here we show that in our algorithm the total number of required movements is $O(\mathcal{D}')$ which is asymptotically optimal. In our algorithm in phase 1, phase 2, and phase 3 only one robot moves. So here a robot uses maximum $O(\mathcal{D}')$ move. Then in phase 4 a robot will form a line and in phase 5 a robot reaches its position, so here also a robot has to move maximum $O(\mathcal{D}')$ steps. In phase 6, phase 7 and phase 8 only one robot will move, so here also maximum $O(\mathcal{D}')$ move is required. So we can conclude that in our algorithm total required movements are $O(\mathcal{D}')$, which is asymptotically optimal.

So starting from any asymmetric configuration it belongs to any of these eight phases, and our algorithm can form any arbitrary pattern by optimal move within finite time. Finally, we can conclude the theorem.

\begin{theorem}
Arbitrary pattern formation is solvable by optimal $O(\mathcal{D}')$ move in asynchronous scheduler by oblivious robots from any asymmetric configuration.
\end{theorem}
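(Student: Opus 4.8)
The plan is to obtain the main theorem by assembling the per‑phase guarantees already proved and then adding a global move‑count argument. First I would observe that the eight phases form a \emph{partition} of the relevant configuration space: each Boolean predicate $S_0,\dots,S_{10}$ of Table~\ref{TABLE-1} depends only on the current configuration, the (fixed) target pattern, and the agreed global coordinate system, and by Section~5.1 that coordinate system is itself recoverable from a single snapshot whenever the configuration is asymmetric. Hence the conjunctions of predicates labelling Phases~1--8 are mutually exclusive and cover every asymmetric configuration with $\neg S_0$, so an oblivious robot, upon activation, can unambiguously determine its phase and execute the prescribed move. This makes the algorithm well defined under \textsc{ASync}.

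Next I would chain the phase theorems in the order suggested by the flowchart of Figure~\ref{Fig:8}. Theorem~\ref{Theorem_1} and its analogues for Phases~2--8 each supply the same two ingredients: (a) every admissible move inside the phase keeps the configuration asymmetric and leaves the head, tail, origin and axes unchanged, and is collision‑free; and (b) after finitely many such moves the phase's exit predicate holds, and that predicate is exactly the entry predicate of the next phase. The one deviation from strict monotonicity is Phase~3, Case~1, where a rightward tail move can falsify $S_3$ and momentarily return the system to Phase~1, from which one upward tail move restores $S_3\wedge S_4\wedge S_5\wedge S_6\wedge\neg S_7$; I would bound this oscillation by noting it occurs at most once per rightward tail step, hence $O(\mathcal D')$ times overall. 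Combining these, any asymmetric start configuration lies in some Phase~$j$ and the system proceeds through Phases~$j,j+1,\dots,8$, terminating by the Phase~8 theorem with $S_0$ true, i.e. $\mathcal C=\mathcal C_T$; since asymmetry and the coordinate agreement are invariant throughout, the pattern is formed in the intended placement and without collisions.

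For the move count I would argue phase by phase. In Phases~1,\,2,\,3,\,6,\,7,\,8 only the head or the tail moves, and in each case its total displacement is bounded by the side of the smallest square enclosing $\mathcal C_I$ and $\mathcal C_T$ together with the constant‑factor slack the tail opens up, i.e. $O(\mathcal D')$; counting the $O(\mathcal D')$ Phase~1/Phase~3 oscillations adds only $O(\mathcal D')$ more. In Phase~4 each inner robot walks monotonically left and then down to the $x$‑axis, and in Phase~5 each inner robot walks along an unobstructed path to $t_i$; using the ordering of the target positions $t_0,\dots,t_{k-1}$ along the lines $H_i$ fixed in Section~5.1, these motions are sequential and their paths aggregate to $O(\mathcal D')$. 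Summing over the constantly many phases yields a total of $O(\mathcal D')$ moves, and the matching lower bound of \cite{BoseAKS20} gives optimality.

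I expect the genuinely delicate step to be this last move‑complexity bookkeeping rather than the logical chaining. Two points need care: (i) no phase other than the Phase~1/Phase~3 pair may ever be re‑entered, so that the $O(\mathcal D')$ per‑phase costs are not silently multiplied; and (ii) the many inner‑robot motions of Phases~4 and~5 must be shown to telescope to $O(\mathcal D')$, which is exactly where the left‑then‑down routing and the snake‑like indexing of the $t_i$'s along successive horizontal lines $H_i$ do the work. Establishing (i) and (ii) rigorously, rather than the per‑phase termination arguments which follow directly from the earlier theorems, is the crux of the final proof.
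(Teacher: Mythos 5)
Your proposal follows essentially the same route as the paper: the paper's proof of this theorem is exactly the chaining of the eight per-phase theorems via the flowchart of Figure~\ref{Fig:8} together with a phase-by-phase move count, concluding optimality from the lower bound cited from \cite{BoseAKS20}. If anything you are more careful than the text, which disposes of the move complexity in one short paragraph and does not address either of the two points you flag as delicate (re-entry between Phase~1 and Phase~3, and the aggregation of inner-robot moves in Phases~4--5). One caution: your phrases ``their paths aggregate to $O(\mathcal{D}')$'' and ``a total of $O(\mathcal{D}')$ moves'' cannot hold for the \emph{sum} of all robots' displacements --- already for $k$ robots filling a $\sqrt{k}\times\sqrt{k}$ square with a compact line as target the total displacement is $\Omega(k\mathcal{D}')$ --- so the bound, like the paper's own wording ``a robot has to move maximum $O(\mathcal{D}')$ steps,'' must be read per robot; with that reading your argument coincides with the paper's.
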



\section{Optimal time Algorithm \texttt{FastAPF} for APF}
This section proposes an algorithm, named \textsc{FastAPF} for \textsc{Apf} which is time optimal. Before going to the algorithm, for convenience let go through some definitions, notions supporting the algorithm. Let $ABCD$ be the unique smallest rectangle enclosing a given initial configuration, where $AB\ge BC$. If $ABCD$ is not square then consider the set of strings $S$ to be $\{\lambda_{AB}, \lambda_{BA}, \lambda_{CD}, \lambda_{DC}\}$. If $ABCD$ is square then consider the set of strings $S$ to be $\{\lambda_{AB}, \lambda_{BA}, \lambda_{CD}, \lambda_{DC}, \lambda_{BC}, \lambda_{CB}, \lambda_{AD}, \lambda_{DA}\}$.

\subsection{Coordinate System setup}
If the given configuration is asymmetric then $S$ contains a lexicographically strictly largest string. Let $\lambda_{AB}$ be the largest among other strings in $S$. Then robots consider $A$ as the origin and $AB$ direction as the positive $x$ axis and $AD$ direction as the positive $y$ axis. In such a case, the first robot in $\lambda_{AB}$ string is said to be $head$. 

Each robot has a light. This light can take two colors, namely, \textsc{head} and \textsc{line} which are readable as well as communicable. The light can indicate another state when the light is off. We denote this one as \textsc{OFF} color.

Next suppose in a given configuration there is a robot, call it \textit{head}, with \textsc{head} color ON on the boundary of $ABCD$. There are two cases, firstly if the head is not situated in any corner and another when the head is at a corner of the $ABCD$, say $A$. For the first case we assume that the head is on the side $AB$, such that $AB\ge CD$.
For such a case consider $A$ as origin, $AB$ direction as positive $x$ axis, and $AD$ direction as positive $y$ axis. For the second case, consider the head robot is situated at a corner, say $A$. In such a case consider $A$ as the origin. If $AB>BC$ then consider $AB$ direction as positive $x$ axis and $AD$ direction as positive $y$ axis. If $AB=BC$, then we assume the configuration is asymmetric and hence there is a lexicographically strictly largest string, say $\lambda_{AB}$ is $S$. In such a case consider the $AB$ direction as a positive $x$ axis and $AD$ direction as a positive $y$ axis.

\paragraph*{Definition of Tail} ~\\
Case-I: (When \textsc{head} color is not ON) In this case we assume that the configuration is asymmetric. The last robot in the lexicographically strictly largest string in $S$ is said to be Tail.\\
Case-II: (When \textsc{head} color is ON) In this case the tail is the rightmost robot of the topmost horizontal line.

Let the $A'B'C'D'$ be the smallest rectangle enclosing the target pattern with $A'B'\ge B'C'$. Let $\lambda_{A'B'}$ be the largest (may not be unique) among all other strings in $S$ for target pattern. We denote the $i^{th}$ target position in $\lambda_{A'B'}$ string as $t_i$. Then the target pattern is to be formed such that $A=A'$, $A'B'$ direction is along the positive $x$ axis and $A'D'$ direction is along the positive $y$ axis.

Let $s=\max\{AD, A'D'\}$. Let name the lines parallel and above to $x$ axis by $H_1, H_1, \dots, H_{s}$. Name the vertical lines from left to right as $V_1,V_2,\dots$, where $V_1$ is the $y$-axis. Let at any time $C(t)$ the configuration be $C(t)$. Let in the target pattern the number of robots in $H_i$ line be $n_i$. Let in $C(t)$ the total number of robots below the line $H_i$ be $b_i$. Let in $C(t)$ the total number of robots above the line $H_i$ be $a_i$. Let $b'_i = \sum_{j<i}n_i$ and $a'_i = \sum_{j>i}n_i$. A horizontal line $H_i$ is said to be \textit{saturated} if $a_i=a_i'$ and $b_i=b_i'$ In Figure~\ref{sat} $H_5$ is a saturated line.

 \begin{figure}[ht]
    \centering
     \includegraphics[width=1\linewidth]{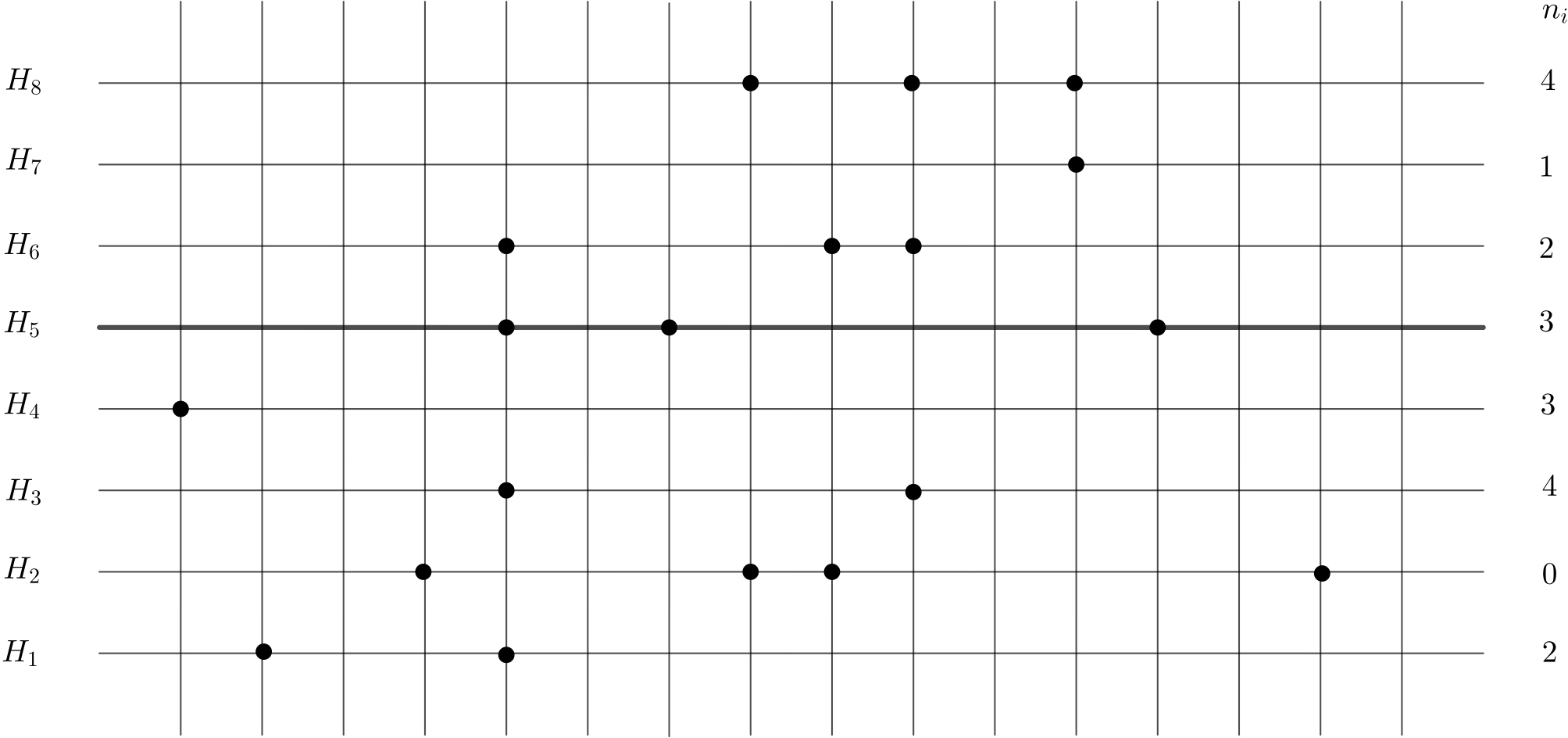}
     \caption{$H_5$ is a saturated line}
     \label{sat}
    \end{figure}

The next subsection describes all the intermediate procedures of \texttt{FastAPF}.
\subsection{Elements of the algorithm}

In order to optimize the time, our main motive is to make the algorithm so that it allows parallel movement of robot as much as possible avoiding collision. The algorithm is divided into six phases. Initially, since the configuration is asymmetric, a robot that activates at first can find out the things listed in Table \ref{tab:0}.
\begin{table}[ht!]
    \centering
    \begin{tabular}{|c|p{6cm}|}
    \hline
      1   &  Smallest enclosing rectangle $ABCD$ with $AB\ge BC$\\
      \hline
      2  & If the configuration is asymmetric,  lexicographically largest string $\lambda_{AB}$\\
      \hline
      3 & Head robot\\
      \hline
      4 & Tail robot\\
      \hline
    \end{tabular}
    \caption{}
    \label{tab:0}
\end{table}

  Throughout the algorithm, the head robot remains head using \textsc{head} color as the flag. And further, the coordinate system remains unaltered throughout the algorithm that has been taken care of. Once the \textsc{head} color is ON and hence the head robot is fixed. Any robot other than the head or tail is said to be \textit{inner} robot. Next, we define terminologies for different configurations.
  \begin{enumerate}
    \item $\mathcal{C}_{init}$ = Initial configuration
    \item $\mathcal{C}_{target}$ = Target configuration
    \item $\mathcal{C}$ = A possible configuration from initial configuration
    \item $\mathcal{C}'$ = $\mathcal{C}\setminus\{Head\}$
    \item $\mathcal{C}''$ = $\mathcal{C}\setminus\{Tail\}$
    \item $\mathcal{C}'''$ = $\mathcal{C}\setminus\{Head,Tail\}$
    \item $\mathcal{C}'_{target}$ = $\mathcal{C}_{target}\setminus\{Head\}$
    \item $\mathcal{C}''_{target}$ = $\mathcal{C}_{target}\setminus\{Tail\}$
    \item $\mathcal{C}'''_{target}$ = $\mathcal{C}_{target}\setminus\{Head,Tail\}$
  \end{enumerate}
  Next we define different conditions on configuration.
  \begin{enumerate}
      \item $C_0: \mathcal{C}=\mathcal{C}_{target}$
      \item $C_1: \mathcal{C}'= \mathcal{C}'_{target}$
      \item $C_2: \mathcal{C}''= \mathcal{C}''_{target}$
      \item $C_3: \mathcal{C}'''= \mathcal{C}'''_{target}$
      \item $C_4:$ \textsc{head} color is ON.
      \item $C_5:$ Head color is at corner.
      \item $C_6:$ All inner robots are with \textsc{line} color ON except those who are on a saturated line.
      \item $C_7:$ Tail is at a point with $x$-coordinate $\max\{AB+1,A'B'+1,k\}$ and $y$-coordinate $\max\{BC,B'C'\}$. 
   \end{enumerate}
\begin{enumerate}
    \item \textbf{Procedure-I:} 
    \textit{Input}: $\neg C_3\vee (C_3\wedge \neg C_1 \wedge \neg C_2)$
    
    In the first procedure, Head identifies itself and turns its \textsc{head} color on. Then it goes to origin if it is not.
    
    \textit{Output}: $C_4\wedge C_5$ is true.
    
    \paragraph*{Discussion} This procedure gets executed when either there is an inner robot not at its target position or all inner robots are at their target but neither head nor tail is at its target position. In such a case, the head robot first puts ON its \textsc{head} color and then moves leftwards until it reaches its origin. Note that if the input configuration is asymmetric then the configuration throughout the procedure remains asymmetric. And also the things listed in Table \ref{tab:0} remain unchanged.
    
    \item \textbf{Procedure-II:} 
    \textit{Input}: $ C_1\wedge \neg C_2  $
    
    In this procedure, the head moves to its target position and turns off its \textsc{head} color if it is ON.
    
    \textit{Output}: $C_0$ is true.
    
    \paragraph*{Discussion} This procedure gets executed when every robot except the head is at their respective target position. In this phase, the head occupies its target position and turns off its \textsc{head} color if it is ON. Now the head target must be on the $x$-axis, so either head needs to move right or left to reach its destination in this procedure. If the head needs to move left then clearly the things listed in Table~\ref{tab:0} remain unchanged. Also in the other case, since the head target is the first target position of the lexicographically largest string of the target pattern, the listed things in Table~\ref{tab:0} remain unchanged.
    
    \item \textbf{Procedure-III:} 
    \textit{Input}: $(C_2 \wedge \neg C_1 ) \vee (C_4 \wedge C_5 \wedge C_3)$
    
    \textit{Case-I:} If the $y$-coordinate of the tail target is the same as the $y$-coordinate of the tail, then the tail reaches at target by horizontal movements.
    
    \textit{Case-II:} If the $y$-coordinate of the tail target is not the same as the $y$-coordinate of the tail, then we consider the following cases.
    
    
    
    \textit{Case-IIA:} If the $y$-coordinate of the tail target is greater than the $y$-coordinate of the tail, then move upwards until Case-I is achieved.
    
    \textit{Case-IIB:} If the $y$-coordinate of the tail target is less than the $y$-coordinate of the tail, move horizontally so that the $x$-coordinate of the tail target becomes the same as the $x$-coordinate of the tail. Then the tail moves downwards until condition $C_0$ is true.
    
    \textit{Output}: $C_0\vee C_1 $ is true.
    
    \paragraph*{Discussion} This procedure gets executed when either the pattern except the tail is formed ($C_2\wedge \neg C_1$) or all inner robots are at their target position and the head is at the origin with head color ON ($C_4 \wedge C_5 \wedge C_3$). For both cases, careful movement is assigned to the tail robot. For all the cases it can be checked that the things listed in Table~\ref{tab:0} remain unchanged. If the input configuration was $C_2\wedge \neg C_1$, after execution of this procedure the target pattern is formed ($C_0$). And if the input pattern was $C_4 \wedge C_5 \wedge C_3$ then after execution of this phase the resulting configuration is such that the target pattern is formed except head robot.    
    
    \item \textbf{Procedure-IV:} 
    \textit{Input}: $C_4\wedge C_5 \wedge \neg C_3 \wedge \neg C_7$
    
   In this procedure the tail moves right until its $x$-coordinate becomes $\max\{AB+1,A'B'+1,k\}$. Then the tail moves upward until its $y$-coordinate becomes $\max\{BC,B'C'\}$.
   
   \textit{Output}: $C_7$ is true.
   
   \paragraph*{Discussion} This procedure gets executed when at least one inner robot is not at its target position and the head robot is at its origin. In this procedure, the robot moves rightwards in order to ensure that there is enough big smallest rectangle of the current configuration for procedure-V and procedure-VI to execute and also to ensure that $AB>BC$ so that the coordinate system does not change.  
    
    \item \textbf{Procedure-V:} 
    \textit{Input}: $C_4\wedge C_5 \wedge \neg C_3 \wedge C_7 \wedge \neg C_6$
    
    In this procedure, if an inner robot is not on a saturated horizontal line then it checks whether its \textsc{line} color is ON or not. If the \textsc{line} color is not ON, then it counts the number of robots below it. Let's say the number is $b$. Then the robot counts the number, say, $l$ of robots on the left side to it in its horizontal line. Let $v=b+l+1$. Then the robot tries to move to the $v^{th}$ vertical line in its horizontal line. If $v^{th}$ vertical line is in its left (right) and its left (right) grid point is empty then it moves to left (right). When a robot reaches the $v^{th}$ vertical line, the robot turns on its \textsc{line} color.
    
    \textit{Output}: $C_6 $ is true (Figure~\ref{C6}).
    
    \paragraph*{Discussion} These procedures are executed when the head robot is at the origin with its \textsc{head} color ON and the tail robot's coordinate is $(\max\{AB+1,A'B'+1,k\},\max\{BC,B'C'\})$ and there is at least one inner robot which is not on the saturated line and with no line color ON. In this procedure, such robot reaches at $v^{th}$ vertical line and turns ON its \textsc{line} color. At the end of this procedure in the configuration, the head and tail remain at their starting position and each the inner robot is either with \textsc{line} color ON or on a saturated line.

     \begin{figure}[ht]
    \centering
     \includegraphics[width=1\linewidth]{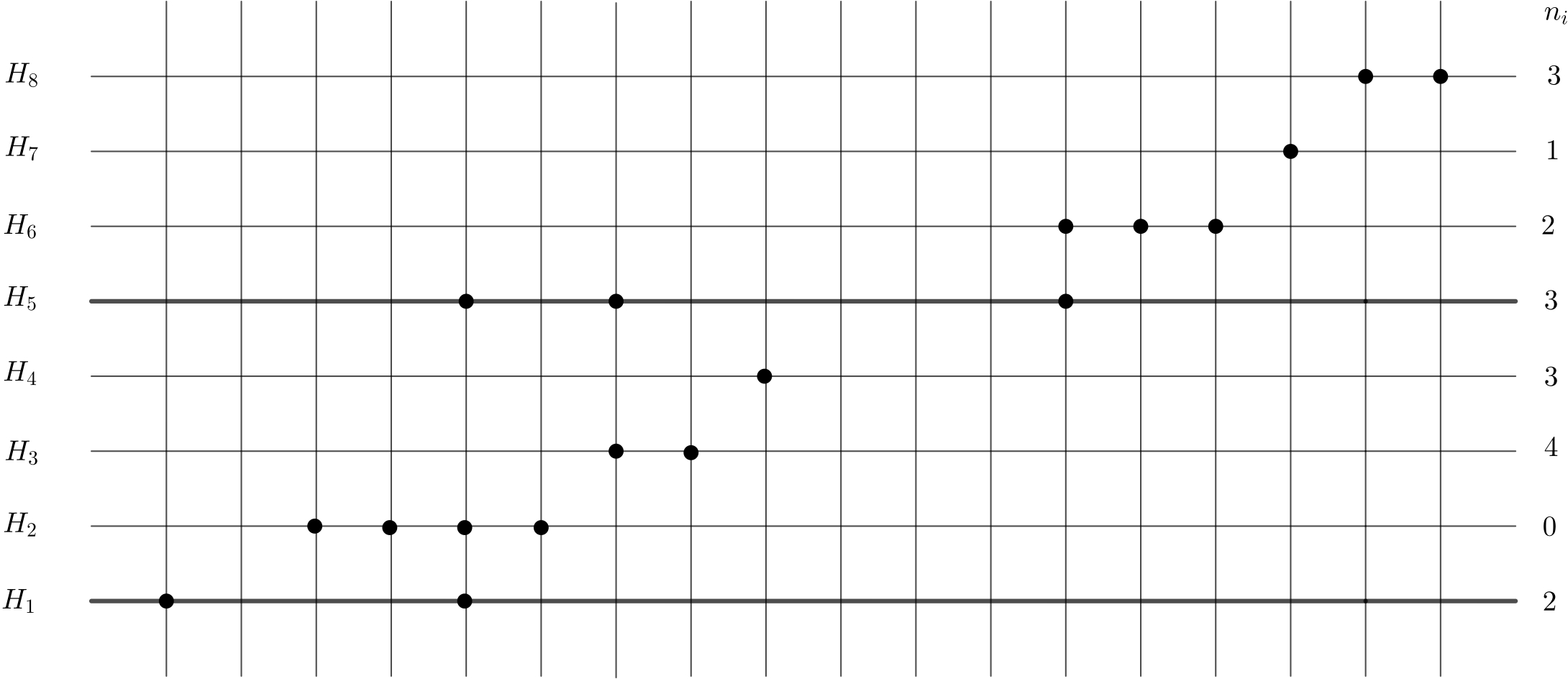}
     \caption{When $C_6$ true}
     \label{C6}
    \end{figure}
     \item \textbf{Procedure-VI:} 
    \textit{Input}: $C_4\wedge C_5 \wedge \neg C_3 \wedge C_7 \wedge C_6$
    
    In this procedure, there are two exhaustive cases.
    
    \textbf{Case-I:}  
     This is the case when the robot sees that its own horizontal line is not saturated. In such a case, the robot counts its vertical line number. Let the robot be at $v_i^{th}$ vertical line. Then it moves to the horizontal line which contains the $v_i^{th}$ target position by vertical movements until it reaches there (Figure~\ref{phase6}).
     
    \textbf{Case-II:} This is the case when the robot sees that its own horizontal line is saturated. In this case firstly if its \textsc{line} color is ON, then it turns it OFF. Otherwise, if the robot is the $k^{th}$ robot on its horizontal line from the left, then it tries to reach the $k^{th}$ target position on that horizontal line. If $k^{th}$ target position in its horizontal line is in its left (right) and its left (right) grid point is empty then it moves to left (right).
    
    \textit{Output}: $C_3 $ is true.
    
    \paragraph*{Discussion} In the input configuration of this procedure an inner robot is either on a saturated line or not on a saturated line but with its \textsc{line} color ON. In this procedure, no two inner robots with \textsc{line} color ON are on the same vertical line. In this phase, two types of movements are happening simultaneously. Firstly robots with line color ON but not on a saturated line do vertical movements and rests do horizontal movements. Eventually, all robots will reach their destined horizontal line and all horizontal lines will become saturated. Next, we need to show no two robots collide in this procedure. We can guarantee that if no robot with \textsc{line} reaches the already saturated line. Now from the definition of a saturated line, one can note that every robot below a saturated line has its target horizontal line below the saturated line. And also every robot above a saturated line has its target horizontal line above the saturated line. Hence no robot making vertical movement will reach a saturated line where horizontal movement is possibly happening. Hence this procedure is collision-free.   
    
     \begin{figure}[ht!]
    \centering
     \includegraphics[width=1\linewidth]{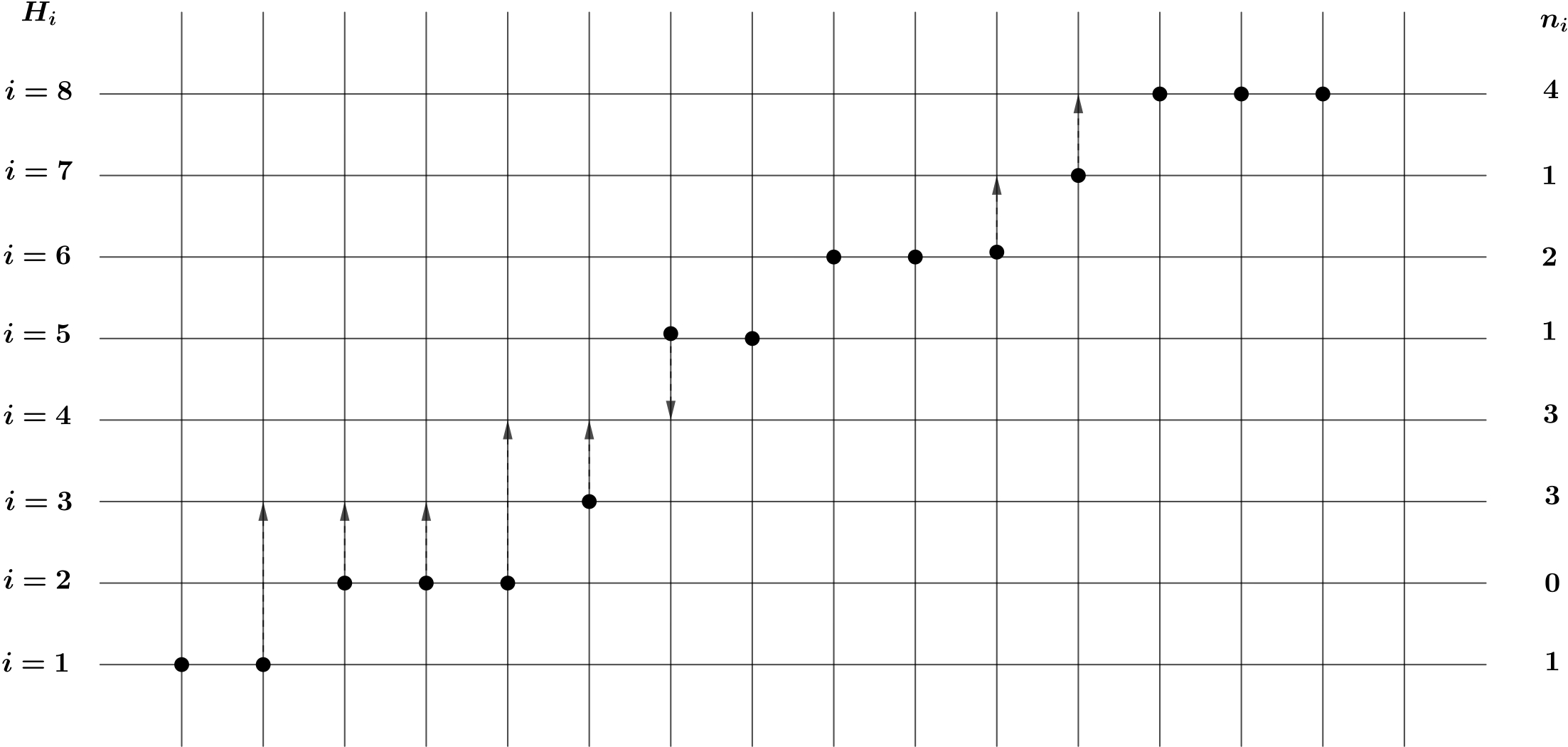}
     \caption{Procedure VI illustration}
     \label{phase6}
    \end{figure}
    \end{enumerate}
In next subsection formally presents the algorithm \texttt{FastAPF} in flow chart form. 

\subsection{Algorithm FastAPF}
The main algorithm \textsc{FastAPF} is presented below in flow chart in Figure~\ref{algo:2}.
\begin{figure}[ht!]
    \centering
     \includegraphics[width=1\linewidth]{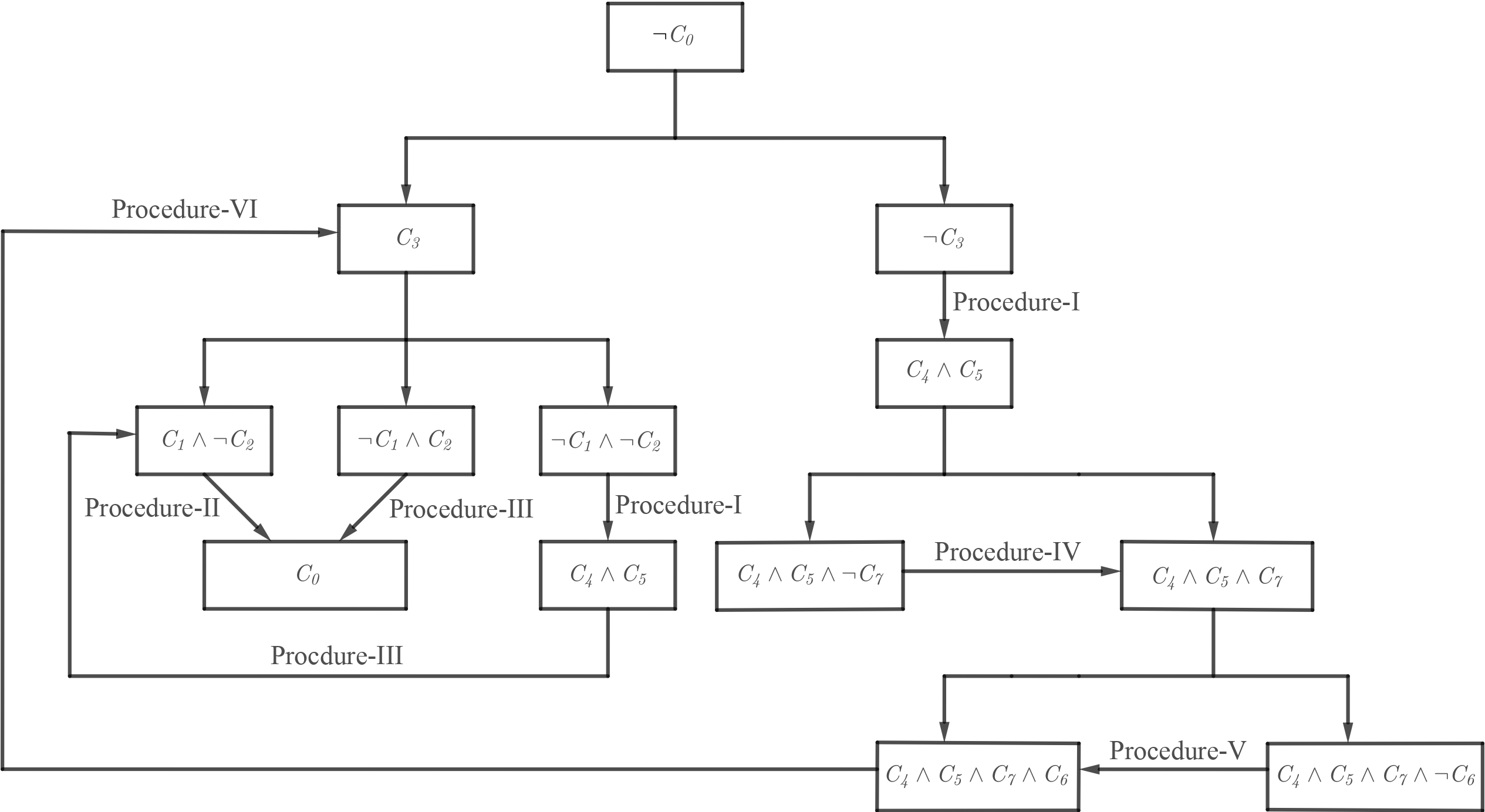}
     \caption{Algorithm \textsc{FastAPF} flow chart}
     \label{algo:2}
    \end{figure}

Starting from any possible configuration where $C_0$ is not true, in the flow chart, we can observe that the path ends to the $C_0$ configuration passing through some procedures. Hence we conclude the following.
\begin{theorem}
The \textsc{FastAPF} algorithm solves the \textsc{APF} problem in $\mathcal{LUMI}$ model.
\end{theorem}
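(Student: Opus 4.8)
The plan is to read off correctness from the flow chart in Figure~\ref{algo:2}: starting from any configuration $\mathcal{C}$ with $\neg C_0$, one shows that the guards route the robots through a finite sequence of the six procedures and terminate in a configuration with $C_0$ true, while every intermediate step preserves the coordinate-system data of Table~\ref{tab:0} and is collision-free. The first step is a bookkeeping one: verify that the input predicates of Procedure-I through Procedure-VI (expressed via $C_0,\dots,C_7$ and the \textsc{head}/\textsc{line} colours) are mutually exclusive and jointly exhaustive on every configuration reachable from an asymmetric $\mathcal{C}_{init}$, so that in each snapshot exactly one procedure is enabled and the flow chart is well defined.

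Next I would establish the two invariants that each procedure must maintain. (i) \emph{Coordinate stability}: the smallest enclosing rectangle $ABCD$, the lexicographically largest string $\lambda_{AB}$, the origin $A$ and the axes, and the identities of head and tail are unchanged by any move of the procedure; once \textsc{head} colour is ON this is replaced by the head-anchored rule of the ``Coordinate System setup'' subsection. (ii) \emph{Asymmetry}: while \textsc{head} colour is OFF, $\mathcal{C}$ stays asymmetric. For Procedures I, II, III and IV these follow from the per-procedure discussions already given (the head only moves leftwards or onto the first target position of $\lambda_{A'B'}$; tail moves stay on or below $CD$; Procedure-IV pushes the tail far enough right to force $AB>BC$). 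For Procedures V and VI the point to check is that all inner-robot motion occurs strictly between the $x$-axis and the topmost occupied line and bounded away from $A$, $B$ and the tail, so none of the competing strings $\lambda_{BA},\lambda_{CD},\dots$ can ever overtake $\lambda_{AB}$.

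Then I would prove that each procedure terminates in $O(\mathcal{D}')$ epochs and is collision-free under \textsc{ASync}. In Procedure-V each inner robot's target vertical line $v=b+l+1$ is computed only from robots strictly below it or to its left on its own horizontal line, so activations on different horizontal lines are independent, horizontal collisions are excluded by the ``move only if the adjacent cell is empty'' rule, and a robot that has turned on \textsc{line} does not move again; monotonicity of the relevant counts yields termination. In Procedure-VI I would invoke the saturated-line observation already recorded in its discussion: every robot below a saturated line has its target row below that line and, symmetrically, above, so a robot doing vertical motion never enters a row where horizontal motion is in progress, giving collision-freeness; simultaneously the number of saturated lines is non-decreasing and strictly increases until $C_3$ holds. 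Chaining these facts along every root-to-$C_0$ path of the flow chart ($\neg C_3$ routes through I$\to$IV$\to$V$\to$VI until $C_3$, then Procedure-III places the tail, then Procedure-II places the head and switches \textsc{head} off) produces, from every asymmetric start, a finite terminating execution ending at $\mathcal{C}=\mathcal{C}_{target}$, which is exactly the claim.

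The main obstacle I anticipate is the asynchronous analysis of Procedures V and VI: one must rule out that a robot acting on a stale snapshot — computing an outdated value of $v$, or treating a now-saturated line as still unsaturated, or moving while another robot is mid-edge — causes either a collision or a configuration that no longer matches any intended intermediate state. Handling this requires showing that the quantities each robot reads ($b$, $l$, the vertical-line index, and the saturation status of the rows it cares about) are invariant along the portion of the execution relevant to that robot, or change only in a direction the guard already tolerates; making this rigorous, rather than the geometry of the moves themselves, is where the real work of the proof lies.
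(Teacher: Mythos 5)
Your plan follows the same route as the paper: the paper's entire proof of this theorem is the one-sentence observation that every path in the flow chart of Figure~\ref{algo:2} starting from $\neg C_0$ ends at $C_0$, with the invariants (coordinate stability, asymmetry, collision-freedom, termination of each procedure) delegated to the per-procedure Discussion paragraphs. Your proposal is in fact more demanding than what the paper actually writes down --- in particular the mutual exclusivity and exhaustiveness of the guards, and the stale-snapshot analysis of Procedures V and VI under \textsc{ASync}, are points the paper never addresses --- so carrying out your plan would strengthen rather than merely reproduce the published argument.
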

\subsection{Time Complexity of FastAPF algorithm}
We show that each procedure included in our algorithm takes $O(\mathcal{D}')$ epoch. Since the algorithm \textsc{FastAPF} is composition of six mentioned procedures, this will prove our claim that, the algorithm \textsc{FastAPF} solves \textsc{APF} problem in $O(\mathcal{D}')$ epochs.

In Procedure-I, Procedure-II, Procedure-III, and Procedure-IV only one robot is making its movement, so all the phases can take maximum $\mathcal{D}'$ epochs to complete. In Procedure-V in each horizontal line, robots are making horizontal movements parallelly. On a horizontal line $H_i$, for all robots to reach the destined vertical line maximum it will take 2$\mathcal{D}'$ epochs. Hence Procedure-V takes at most 2$\mathcal{D}'$ epochs to complete. In procedure-VI, a robot with \textsc{line} color on makes vertical movements. All vertical movements in this procedure happen simultaneously. So all the vertical movements are done in $\max\{BC,B'C'\}$ epochs. Then in a saturated line horizontal movements happen in order to reach the target positions. Similarly, this also takes a maximum 2$\mathcal{D}'$ epochs. Hence this procedure also takes $O(\mathcal{D}')$ epochs. Hence we conclude this discussion with the following theorem.

\begin{theorem}
The \textsc{FastAPF} algorithm solves the \textsc{APF} problem in $\mathcal{LUMI}$ model in $O(\mathcal{D}')$ epochs.
\end{theorem}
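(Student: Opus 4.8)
The plan is to lean on two structural facts and then do a per-procedure epoch count. First, any execution of \textsc{FastAPF} visits only a constant number of procedure invocations before reaching a configuration satisfying $C_0$. Second, each of the six procedures terminates within $O(\mathcal{D}')$ epochs. Granting these, the total running time is at most a constant times $O(\mathcal{D}')$, hence $O(\mathcal{D}')$, and by the correctness statement established above the terminal configuration is $\mathcal{C}_{target}$.

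For the first fact I would read off the flow chart in Figure~\ref{algo:2} together with the monotonicity of the guard conditions already recorded in the Discussion paragraphs: once the \textsc{head} colour is ON it stays ON (so $C_4$ is never falsified), once $C_7$ holds the tail stays parked throughout Procedures V and VI, $C_6$ is preserved by Procedure VI (a robot that turns its \textsc{line} colour off does so only after landing on a saturated line), and $C_3$, once achieved, is not destroyed. Consequently the list of procedures visited is, up to the ordering imposed by the chart, a subsequence of $\mathrm{I},\mathrm{IV},\mathrm{V},\mathrm{VI},\mathrm{III},\mathrm{II}$ (or one of the short branches $\mathrm{I},\mathrm{III},\mathrm{II}$ / $\mathrm{II}$ / $\mathrm{III}$), and no procedure is re-entered; this step is pure bookkeeping on invariants stated earlier.

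For the second fact I split by procedure. In Procedures I, II, III and IV exactly one robot (the head or the tail) is mobile; its prescribed route is a monotone grid path whose length is bounded by the side lengths of the current $s.rect$, and after Procedure IV even by $\max\{AB+1,A'B'+1,k\}=O(\mathcal{D}')$. Since in every epoch the mobile robot is activated at least once and is then directed one step along this path, it covers the route in $O(\mathcal{D}')$ epochs (the earlier theorems give that the intermediate configurations stay asymmetric and the coordinate system is unchanged, so the route never shifts under the robot). In Procedure V the robots on a fixed horizontal line $H_i$ are assigned the consecutive target vertical lines $b_i+1,\dots,b_i+m_i$ in the same left-to-right order in which they currently sit on $H_i$ (where $b_i$ is the number of robots strictly below $H_i$ and $m_i$ the number on it), so no robot ever has to overtake another; a robot is delayed only by a neighbour that is itself moving strictly outward, and one shows every robot on $H_i$ reaches its slot within $2\mathcal{D}'$ epochs, after which it turns on \textsc{line}. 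As the lines act in parallel, Procedure V costs $O(\mathcal{D}')$ epochs. Procedure VI is handled the same way: the vertical relocations of the \textsc{line}-coloured robots all proceed simultaneously, each of length at most $\max\{BC,B'C'\}=O(\mathcal{D}')$, and — by the saturated-line argument recalled in its Discussion — collision-free and never entering a line on which horizontal motion is underway; then the horizontal packing inside each (now saturated) line again costs at most $2\mathcal{D}'$ epochs by the identical left-to-right ordering argument.

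The main obstacle is exactly this parallel-movement analysis of Procedures V and VI: one must rule out a cascade in which $\Theta(k)$ robots on a single line each wait a full $\Theta(\mathcal{D}')$ epochs for the previous one, which would yield $\Theta(k\mathcal{D}')$ and break optimality. The remedy is the order-preserving assignment of targets. One can set, on a line, $\Phi=\sum_i \mathrm{dist}(r_i,\text{target}_i)$ and note that in every epoch the currently outermost still-misplaced robot is unobstructed and moves, so $\Phi$ strictly decreases; but $\Phi=O(k\mathcal{D}')$ only gives the weak bound, so one sharpens it by a ``peeling'' argument — once the outermost robot reaches its outermost slot it is fixed forever, so robots are fixed from the outside in, the $j$-th fixation is delayed beyond the $(j-1)$-th by at most an $O(\mathcal{D}')$ travel, and all these delays overlap in time, giving the clean $O(\mathcal{D}')$ bound. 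Making this precise under the asynchronous scheduler, where a robot may act on a stale snapshot taken mid-cycle, is the delicate point, and is precisely where the \textsc{line} colour is used: it certifies to a later-activating robot that an outward neighbour has already been placed, so the peeling order cannot be spoiled by timing.
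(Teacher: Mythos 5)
Your proposal follows essentially the same route as the paper: bound each of the six procedures by $O(\mathcal{D}')$ epochs (single-robot monotone paths for Procedures I--IV, parallel per-line horizontal motion for Procedure V, simultaneous vertical motion followed by in-line packing for Procedure VI) and sum over the constant-length chain of procedures read off the flow chart. The only difference is that you supply substantially more justification than the paper does for the $2\mathcal{D}'$ bounds in Procedures V and VI --- the paper simply asserts them, whereas your order-preserving-assignment and peeling argument is exactly the missing detail needed to rule out the $\Theta(k\mathcal{D}')$ cascade.
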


\begin{figure}[ht]
    \centering
     \includegraphics[width=1\linewidth]{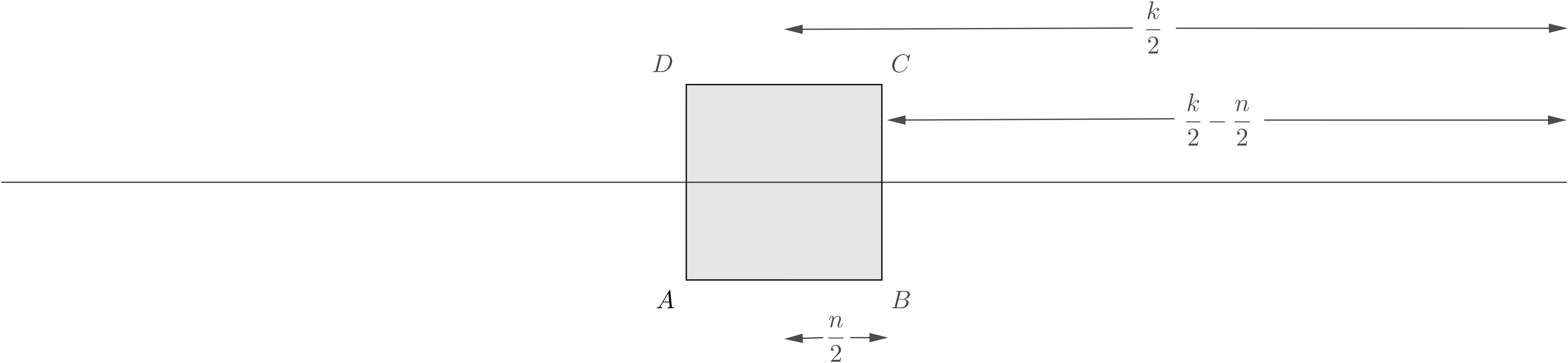}
     \caption{An image related to Theorem \ref{th10} }
     \label{optimal0}
    \end{figure}

Now we show that our algorithm is asymptotically optimal in time. Let's consider an initial configuration of robots has the smallest enclosing rectangle a square of length $n$ and each grid point of the square has a robot. Thus there are $k=n^2$ robots in total. Let the target pattern be a compact line. Then we can see that the farthest point on the line (wherever it may be placed) from the initial configuration is at least $\dfrac{k-n}2=\dfrac k 2-\dfrac{\sqrt{k}}2$ (See Figure~\ref{optimal0}). Hence there is a robot that at least needs to move $\dfrac k 2-\dfrac{\sqrt{k}}2$ steps. Hence at least $\dfrac k 2-\dfrac{\sqrt{k}}2$ epochs is necessary. In this case, note that $\mathcal{D}=k$. Hence we can state the following.

\begin{theorem} \label{th10}
Any algorithm solving \textsc{Apf} problem requires $\Omega(\mathcal{D}')$ epochs.
\end{theorem}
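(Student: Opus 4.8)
The plan is to make rigorous the hard instance already sketched just above the statement. Fix a parameter $n$, let the initial configuration $\mathcal{C}_{init}$ be the completely filled $n\times n$ axis-aligned square, so there are $k=n^2$ robots, and let the target pattern $\mathcal{C}_{target}$ be a compact line of $k$ grid points (see Figure~\ref{optimal0}). Since a compact line of $k$ points spans $k-1$ unit edges, the smallest square enclosing $\mathcal{C}_{target}$, and hence the smallest square enclosing $\mathcal{C}_{init}\cup\mathcal{C}_{target}$, has side $\Theta(k)$; together with $k=n^2\ge n$ this gives $\mathcal{D}=\Theta(k)$ and therefore $\mathcal{D}'=\max\{\mathcal{D},k\}=\Theta(k)$. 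So it suffices to show that on this instance every correct algorithm forces some robot to make $\Omega(k)$ moves, and that this in turn costs $\Omega(k)$ epochs.

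For the displacement bound, consider the axis along which the target line lies in whatever global placement the algorithm fixes. Along that axis the initial square occupies $n$ consecutive coordinates while the line occupies $k$ consecutive coordinates; since $k>n$, no translation of the line can keep all of its cells within axis-distance less than $\frac{k-n}{2}$ of the square (the overhang $k-n$ is minimized, and then split evenly, exactly by centering the line on the square). Hence there is a target cell at axis-distance at least $\frac{k-n}{2}=\frac{k}{2}-\frac{\sqrt{k}}{2}$ from every cell of the initial square. Whatever robot occupies that cell in the final configuration started somewhere inside the square, so it travelled an $L^1$ distance — and thus performed a number of unit grid moves — at least $\frac{k}{2}-\frac{\sqrt{k}}{2}$; note this conclusion does not depend on how the anonymous robots get matched to target cells.

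To pass from moves to epochs, I would pin down the scheduler, since the lower bound only needs to hold for some adversarial \textsc{ASync} schedule. By the movement model a robot changes its location by at most one grid edge in a single \textsc{Look-Compute-Move} cycle; and an epoch merely requires every robot to be activated at least once, so the adversary may run the robots in strict round-robin, giving each robot exactly one \textsc{Lcm} cycle per epoch. Under this schedule any robot performs at most one move per epoch, so finishing in $E$ epochs forces $E\ge\frac{k}{2}-\frac{\sqrt{k}}{2}$. Combined with $\mathcal{D}'=\Theta(k)$ this yields $E=\Omega(\mathcal{D}')$, proving the theorem.

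The only delicate point is this last move-to-epoch conversion: one must rule out an epoch secretly absorbing many consecutive cycles of the same robot, which is exactly why we fix the round-robin adversary rather than arguing about arbitrary schedules. A minor but worth-stating subtlety in the middle step is that the $\Omega(k)$ bound must be stated for a target \emph{cell} versus the whole initial square (not for a specific robot versus a specific target), precisely because the robots are indistinguishable and the algorithm is free to choose the matching.
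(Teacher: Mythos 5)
Your proof is correct and follows essentially the same route as the paper: the filled $n\times n$ square with $k=n^2$ robots, a compact line as target, the $\frac{k-n}{2}=\frac{k}{2}-\frac{\sqrt{k}}{2}$ displacement bound, and $\mathcal{D}'=\Theta(k)$. The paper states this argument more tersely and in particular omits the move-to-epoch conversion that you handle via the round-robin adversary; that added care is a genuine (and welcome) tightening, but not a different approach.
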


This above result shows that the algorithm \textsc{FastAPF} is time optimal asymptotically.

\section{Conclusion}
This paper studied both move and time-optimal algorithms for \textsc{arbitrary pattern formation} (\textsc{Apf}) problem on an infinite grid by robots starting from any asymmetric initial configuration. Here this work gives an algorithm for the \textsc{Apf} problem in $\mathcal{OBLOT}$ model which is asymptotically move optimal. Then this work proposed another algorithm for the \textsc{Apf} problem in $\mathcal{LUMI}$ model which is asymptotically time optimal. If $\mathcal{D'}$ is the maximum of the number of robots and the side of the smallest enclosing square enclosing both target and initial configuration, then the first algorithm uses total $O(\mathcal{D}')$ movements and the second algorithm takes total $O(\mathcal{D}')$ epochs to solve APF problem. Here we assume the initial configuration is asymmetric but for further work, it will be interesting to study any fast algorithm when the configuration is not asymmetric initially. One can consider another problem where the initial or the target configuration of robots has multiplicity points.

\end{document}